\documentclass[conference,compsoc]{IEEEtran}

\ifCLASSOPTIONcompsoc
  \usepackage[nocompress]{cite}
\else
  \usepackage{cite}
\fi

\usepackage{amsmath,amssymb,amsfonts}
\usepackage{amsthm}
\usepackage{enumitem}
\usepackage{graphicx}
\usepackage{booktabs}
\usepackage{url}
\usepackage{xcolor}
\ifCLASSOPTIONcompsoc
  \usepackage[caption=false,font=footnotesize,labelfont=sf,textfont=sf]{subfig}
\else
  \usepackage[caption=false,font=footnotesize]{subfig}
\fi

\graphicspath{{figures/}}

\newtheorem{theorem}{Theorem}
\newtheorem{lemma}{Lemma}
\newtheorem{proposition}{Proposition}
\newtheorem{corollary}{Corollary}
\theoremstyle{definition}
\newtheorem{definition}{Definition}
\theoremstyle{remark}
\newtheorem{remark}{Remark}

\newcommand{\Bip}{\mathcal{B}_{\mathrm{ip}}}
\newcommand{\Bdom}{\mathcal{B}_{\mathrm{dom}}}
\newcommand{\lamdisc}{\lambda_{\mathrm{disc}}}
\newcommand{\lamintro}{\lambda_{\mathrm{intro}}}
\newcommand{\lama}{\lambda_{a}}
\newcommand{\Pdeny}{P_{\mathrm{deny}}}
\newcommand{\kmax}{k_{\max}}
\newcommand{\bstar}{\beta^{\star}}
\newcommand{\bstaravg}{\beta^{\star}_{\mathrm{avg}}}

\newcommand{\PHaddrone}{7}      %
\newcommand{\PHbstar}{0.65}     %
\newcommand{\PHbstarTA}{0.8}    %
\newcommand{\PHmintday}{a few hundred} %

\makeatletter
\renewcommand\paragraph{\@startsection{paragraph}{4}{\z@}%
  {0.5\baselineskip \@plus .2\baselineskip}%
  {-0.5em}%
  {\normalfont\normalsize\bfseries}}
\makeatother

\makeatletter
\let\@os@orig\@startsection
\renewcommand\@startsection[6]{%
  \@os@orig{#1}{#2}{#3}{#4}{#5}%
    {#6\ifnum#2<4 \raggedright\hyphenpenalty\@M\exhyphenpenalty\@M\fi}}
\makeatother

\begin{document}

\title{Block-A-Mole: The Sustainability Frontier of Moving-Target Censorship Resistance}

\author{\IEEEauthorblockN{\textbf{Anindya Maiti} (\textit{University of Oklahoma}, \texttt{am@ou.edu})}}

\maketitle
\pagestyle{plain}
\thispagestyle{plain}

\begin{abstract}
Internet censorship affects over four billion people, and every deployed
circumvention system shares one structural weakness. Its endpoints are fixed
and discoverable, so a patient censor eventually enumerates and blocks them. A
recent class of systems instead makes infrastructure a \emph{moving target},
rotating endpoints across commercial cloud address space faster than a censor
can block them. These systems demonstrate that the idea works, but the field
has no theory of \emph{when} it works, and rotation intervals and pool sizes are
chosen by intuition. We give the first formal account. We model the
censor and defender interaction as a continuous-time timing game on a
combinatorial (address, domain) space, generalizing FlipIt to a
collateral-bounded adversary, and we prove a \emph{sustainability frontier} that
separates the configurations a censor can defeat from those it cannot. Our
central finding is that under the Great Firewall's 2024 shift to blocking QUIC
and TLS \emph{by domain}, raw rotation speed is not the binding constraint.
Whether rotation wins is governed by a single dimensionless quantity, the
\emph{domain burn rate} $\beta=\lamdisc/\lamintro$, the ratio of how fast the
censor blocks the defender's domains to how fast the defender mints fresh ones.
We derive a closed-form availability law, prove that address rotation alone
($\beta>1$) cannot sustain high availability regardless of how fast endpoints
rotate, and characterize the sustainability frontier $\bstar$. We confirm every
analytical result in an open, model-level censor--defender simulator that needs no
privileged access or cloud deployment, reproducing the predicted phase transition
at $\bstar$ under adversary profiles representative of the GFW, Russia's TSPU, and
Iran, and we show the frontier is robust to state-dependent discovery and bursty,
provider-correlated burns. The validation is model-level, confirming that the
analysis and an independent simulation agree rather than fitting a specific censor.
The theory replaces the common design heuristic of rotating faster with a precise
operating condition, namely keeping the domain economy ahead of the censor, and
supplies an open, reproducible testbed for the censorship-resistance community.
\end{abstract}

\section{Introduction}
\IEEEPARstart{I}{nternet} censorship is among the defining challenges of the
digital age. Freedom House reports that global Internet freedom has declined for
fourteen consecutive years, and over 60 governments routinely block websites,
throttle connections, or shut down networks, affecting an estimated four billion
people~\cite{freedomhouse}. The sophistication of this control has escalated and
is now worldwide. China's Great Firewall (GFW) performs real-time protocol
fingerprinting, active probing, and passive detection of fully encrypted
traffic~\cite{gfw_fullyencrypted,gfw_shadowsocks}, and in April 2024 it began
decrypting QUIC Initial packets at scale and blocking connections \emph{by
domain}~\cite{gfw_quic}, reaching the very transport that modern moving-target
systems rely on for session continuity. Russia's TSPU enforces ISP-level
protocol blocking at line speed~\cite{tspu}, and Iran runs a national filtering
infrastructure able to disconnect the country during unrest~\cite{iran_firstlook,
iran_whitelister}.

Across two decades of circumvention research, spanning Tor and its pluggable
transports~\cite{tor,pluggable_transports}, domain fronting~\cite{domain_fronting},
Snowflake~\cite{snowflake}, and refraction networking~\cite{conjure}, one
structural weakness recurs. \emph{Any system with fixed, discoverable endpoints
is eventually enumerated and blocked.} Tor relay IPs are public, bridges are
harvested from every distribution channel~\cite{bridge_discovery}, and pluggable
transports are fingerprinted within weeks~\cite{gfw_fullyencrypted}. This is a
consequence of \emph{static infrastructure} rather than an implementation defect.

A recent line of work inverts this asymmetry by making infrastructure a
\emph{moving target}, rotating endpoints across commercial cloud address space
faster than a censor can act on what it discovers. SpotProxy~\cite{spotproxy}
runs circumvention proxies on cloud spot instances, exploiting the providers' own
reclamation churn to rotate IP addresses at low cost. NetShuffle~\cite{netshuffle}
shuffles proxy addresses through programmable switches in cooperating edge
networks, and CensorLess~\cite{censorless} places bridges inside short-lived
serverless functions. In the DNS setting, the NinjaDoH system~\cite{ninjadoh}
rotates DNS-over-HTTPS resolvers across ephemeral hyperscaler instances whose IP
addresses change continuously, announced through a decentralized naming
layer~\cite{ipfs}, and reports high availability at a cost of cents per day. What
unifies these designs is \emph{collateral freedom}. A censor that blocklists a
hyperscaler's IP range to suppress one moving-target service also blocks the
legitimate workloads its economy depends on. AWS alone operates more than $10^8$
public IPv4 addresses, behind which sit domestic banks, retailers, and government
services, so severing that range to chase one rotating endpoint imposes a domestic
cost the censor will not accept. Rotation thus makes the censor's discovery work
stale faster than it can act on it, at least for the address.

These systems prove the idea works, but the field has \emph{no theory of when}
it works. Every system picks rotation intervals and pool sizes by intuition, and
three first-order questions have no principled answer.
\begin{enumerate}[leftmargin=*]
\item \emph{When does rotation provably win?} There is no characterization of the
conditions on rotation rate, pool size, collateral budget, and the rate at which
a censor burns the defender's domains under which availability is
\emph{guaranteed}, nor of the censor's optimal counter-strategy.
\item \emph{What is the binding resource?} Systems rotate IP addresses, but the
GFW now blocks by domain. Whether the scarce resource is the address or the domain
changes the entire design calculus, and has never been settled.
\item \emph{Is ``rotate faster'' even the right lever?} The recurring intuition
that a defender should out-rotate the censor has never been made precise, and, as
we show, is misleading under a domain-filtering censor.
\end{enumerate}
This paper answers all three.

\noindent\textbf{Contributions.}\quad This paper supplies the missing theory and
an open instrument to test it. We make the following contributions.

\begin{itemize}[leftmargin=*]
\item \textbf{A timing-game model of moving-target censorship resistance}
(\S\ref{sec:game}). We formalize censor--defender interaction as a
continuous-time timing game on a combinatorial (address, domain) space,
generalizing the FlipIt model of stealthy takeover~\cite{flipit} and Stackelberg
models of moving target defense~\cite{stackelberg_mtd} to a setting with a
\emph{combinatorial resource} and a \emph{collateral-bounded} adversary
(Definitions~\ref{def:game}--\ref{def:avail}).

\item \textbf{The address resource is free, the domain resource is binding}
(\S\ref{sec:address}--\ref{sec:domain}). We prove that address blocking incurs a
denial probability that decays geometrically in the number of endpoints
(Lemma~\ref{lem:geom}), so address rotation defeats an address-only censor. The
scarce resource is the \emph{domain} pool, which we model as a birth--death
process (Proposition~\ref{prop:dichotomy}).

\item \textbf{A closed-form availability law and an impossibility result}
(\S\ref{sec:frontier}). We derive availability in closed form
(Theorem~\ref{thm:closedform}) and prove that whenever the domain burn rate
$\beta=\lamdisc/\lamintro > 1$, availability is bounded away from one
\emph{regardless of rotation speed}~$\mu$, pool buffer, or endpoint count
(Theorem~\ref{thm:impossible}). Sustained availability requires
$\beta<\bstar\le1$, and we characterize the frontier $\bstar$
(Theorem~\ref{thm:frontier}) and the censor's optimal Stackelberg response
(Proposition~\ref{prop:stackelberg}).

\item \textbf{An open censor--defender simulator}
(\S\ref{sec:sim}). We release an event-driven simulator that instantiates the
game, exercised here with adversary profiles representative of documented GFW,
TSPU, and Iranian behavior and designed to be calibrated to public measurement
archives. It needs no privileged access or cloud deployment, giving the community a
reproducible, open testbed.

\item \textbf{Validation through a phase transition at $\bstar$}
(\S\ref{sec:eval}). The simulator reproduces every theoretical prediction,
including a sharp phase transition in availability at $\bstar$, and shows that
increasing rotation speed does not move the frontier. We convert the theory into a
concrete domain-economy operating recipe.
\end{itemize}

The result reframes a decade of system design. The governing parameter is the
domain-introduction rate, not the rotation interval.

\section{Background and Related Work}
Censorship resistance is rich in \emph{systems} but poor in \emph{theory}, and
two systematizations survey the landscape and note the same
gap~\cite{sok_empiricism,sok_censorship}. We review the lineage thematically and
make the gap explicit.

\subsection{The Static-Infrastructure Arms Race}
Two decades of circumvention have followed one pattern. A transport is deployed,
the censor learns to detect or block it, and a successor appears. Tor~\cite{tor}
and its pluggable-transport framework~\cite{pluggable_transports} anchor the
lineage. Early designs tunneled through allowed services
(Infranet~\cite{infranet}) or imitated allowed protocols such as
SkypeMorph~\cite{skypemorph}, format-transforming encryption~\cite{fte}, and the
programmable obfuscator Marionette~\cite{marionette}. Houmansadr et al.\ showed
that such \emph{mimicry} is fundamentally flawed because perfect behavioral
imitation is infeasible~\cite{parrot}. Randomized transports
(ScrambleSuit~\cite{scramblesuit}, obfs4~\cite{obfs4}) make traffic look like
nothing, yet are themselves fingerprinted~\cite{seeing_obfuscation}, and ``fully
encrypted'' protocols are now detected wholesale by the
GFW~\cite{gfw_fullyencrypted}. Probe-resistant proxies that present genuine TLS,
such as REALITY~\cite{reality} and Shadowsocks~\cite{shadowsocks}, resist active
probing but are still detectable~\cite{probe_resistant} and rely on \emph{static}
endpoints. Tunnel-in-carrier designs (FreeWave~\cite{freewave},
Protozoa~\cite{protozoa}, Balboa~\cite{balboa}) embed data in innocuous flows but
inherit the same structural weakness. Any fixed, discoverable endpoint is
eventually enumerated and blocked~\cite{bridge_discovery,gfw_hidden}. This work
assumes the censor finds every address and rotates before blocking takes effect.

\subsection{Refraction and Decoy Routing}
A different escape from static endpoints places the circumvention proxy
\emph{inside} the network, at a cooperating ISP, so there is no discrete address to
block. Decoy routing~\cite{decoy_routing} and its instantiations
Telex~\cite{telex}, Cirripede~\cite{cirripede}, and TapDance~\cite{tapdance}
realize this, with later work resisting traffic analysis
(Slitheen~\cite{slitheen}), routing-around-decoy attacks
(Waterfall~\cite{waterfall}), reusing unused address space
(Conjure~\cite{conjure}), and pushing toward real deployment~\cite{refraction_real}.
Refraction is powerful but requires \emph{ISP cooperation}. Our setting assumes no
such ally and instead exploits the economics of commodity clouds.

\subsection{Moving-Target Circumvention}
The line closest to ours makes infrastructure a moving target on commodity
clouds. SpotProxy~\cite{spotproxy} rotates proxy IPs using spot-instance
reclamation churn. NetShuffle~\cite{netshuffle} shuffles proxy addresses through
programmable switches at cooperating edges, CensorLess~\cite{censorless} places
bridges in serverless functions, and in the DNS setting NinjaDoH~\cite{ninjadoh}
rotates DoH resolvers across hyperscaler instances. Domain
fronting~\cite{domain_fronting} and its decline under provider
pressure~\cite{tls_circumvention} are the cautionary precursor, and Turbo
Tunnel~\cite{turbotunnel} supplies the session persistence rotation needs.
Deployed tools (Psiphon~\cite{psiphon}, Lantern~\cite{lantern},
Snowflake~\cite{snowflake}) reach millions but are likewise point designs. All are
validated empirically under specific conditions and, most relevant here, come with
\emph{no formal guarantee} about when rotation defeats a censor. We supply that
guarantee and a theory that applies across these designs.

\subsection{Discovery and Anti-Enumeration}
Rotation is of little use if users cannot find the current endpoints, and any
directory a user can query a censor can query too. Bridge distribution has
accordingly evolved from user-generated-content channels (Collage~\cite{collage})
and measurement-driven dissemination (Proximax~\cite{proximax}) to reputation- and
social-graph-protecting schemes (rBridge~\cite{rbridge}, Salmon~\cite{salmon},
Lox~\cite{lox}), all fighting systematic harvesting~\cite{bridge_discovery}.
Decentralized naming (IPFS/IPNS~\cite{ipfs}, ENS~\cite{ens},
Handshake~\cite{handshake}) offers censorship-resistant lookup but is publicly
queryable, and the cryptographic tools for a private directory (anonymous
credentials~\cite{anoncreds}, private information retrieval~\cite{pir}) have not
been combined with rotation. Discovery is orthogonal to the present paper's
reachability analysis, and we note it because a complete system needs both.

\subsection{Censor Capabilities and Measurement}
Our threat model is grounded in measured censor behavior. The GFW has been
characterized blocking Tor~\cite{gfw_blocking_tor}, actively probing hidden
servers~\cite{gfw_hidden}, injecting DNS responses with wide collateral
damage~\cite{dns_collateral,gfw_dns_picture}, detecting Shadowsocks~\cite{gfw_shadowsocks}
and fully encrypted traffic~\cite{gfw_fullyencrypted}, and, decisively for this
work, filtering QUIC and TLS \emph{by domain}~\cite{gfw_quic}, with GFWatch
documenting hundreds of thousands of persistently blocked domains~\cite{gfwatch}.
Other censors are comparably documented, including Russia's TSPU and decentralized
control~\cite{tspu,russia_decentralized}, Iran's filtering and protocol
whitelisting~\cite{iran_firstlook,iran_whitelister}, and country-scale
shutdowns~\cite{ioda,keepiton,freedomhouse}. A mature measurement ecosystem makes
this behavior observable, including OONI~\cite{ooni}, Censored
Planet~\cite{censoredplanet}, ICLab~\cite{iclab}, Quack~\cite{quack},
Augur~\cite{augur}, Iris~\cite{iris}, Satellite~\cite{satellite},
Encore~\cite{encore}, and global filter mapping~\cite{censorship_filters}. Our
simulator draws its adversary mechanisms from these documented sources and is
designed to be calibrated to the public archives, neither of which requires
privileged access.

\subsection{Encrypted DNS, SNI, and ECH}
Because the censor now filters by name, name-hiding is directly relevant.
Encrypted DNS (DoH/DoT) and oblivious variants~\cite{odoh} reduce metadata leakage,
and ESNI/ECH aim to hide the server name in TLS~\cite{esni_importance,
ech_circumvention}, but measurement shows these are themselves blocked or degraded
under censorship~\cite{encrypted_dns_impact,dneye}, and the GFW's 2024
QUIC-by-domain filtering~\cite{gfw_quic} reaches the very transport moving-target
systems rely on. Name encryption raises the censor's cost but does not remove the
domain as a scarce, blockable resource, which is the premise our theory formalizes.

\subsection{Traffic Analysis and Its Defenses}
A complementary threat is per-flow detection. Deep-learning website-fingerprinting
attacks~\cite{deepfingerprinting,automated_wf,tiktok,varcnn,kfingerprinting,
wf_internet_scale}, portable few-shot variants~\cite{triplet_fp}, pretrained
transformer classifiers~\cite{etbert}, and compressive analysis~\cite{compressive_ta}
keep raising the bar, answered by padding, regularization, and traffic-splitting
defenses~\cite{wtfpad,walkietalkie,front,regulator,surakav,trafficsliver} and
adversarial perturbations~\cite{mockingbird,blind_adversarial}. Bandwidth lower
bounds are known for \emph{anonymity} (the anonymity trilemma~\cite{anonymity_trilemma})
but not for moving-target availability. We do not address per-flow detectability,
and our reachability results compose with any of these transports.

\subsection{Game-Theoretic Moving Target Defense}
The timing-game machinery we build on comes from a literature that has not met
censorship. FlipIt models stealthy takeover as a timing game~\cite{flipit}.
Stackelberg and broader formulations derive optimal MTD timing and address mutation
against rational attackers~\cite{stackelberg_mtd,mt6d,ofrhm,
nasr_randomization,mtd_survey_sengupta,mtd_survey_cho,mtd_theory}, and security
games provide the equilibrium and online-learning
foundations~\cite{game_theory_security,tambe_security_games,online_stackelberg},
with adversarial bandits and partial monitoring~\cite{exp3,bandit_algorithms} the
natural tools for an adaptive-censor controller. These works target enterprise
defense and never model a combinatorial address space, a collateral-bounded
adversary, or domain-based filtering, while censorship moving-target
systems, which run on transient cloud servers~\cite{spot_instances}, borrow none
of the game-theoretic machinery. We bridge them.

\subsection{The Theory Gap}
Across moving-target censorship systems the literature offers measured
performance but no theory. There is no characterization of when rotation wins, no
condition separating the configurations a censor can defeat from those it cannot,
and no account of the censor's optimal counter-strategy. The closest prior theory,
game-theoretic moving target defense, proves guarantees only for enterprise
settings and never models a combinatorial address space, a collateral-bounded
adversary, or domain-based filtering. This paper closes that gap for the rotation
layer, supplying the missing conditions as closed-form results together with an
open simulator any group can rerun to reproduce and build on them.

\section{Threat Model}\label{sec:threat}
We model a nation-state censor consistent with documented capabilities of the
GFW, Russia's TSPU, and Iran's infrastructure~\cite{gfw_quic,gfw_fullyencrypted,
tspu,iran_whitelister}.

\paragraph{Position and detection}
The censor controls choke points (gateways, ISP middleboxes) and can observe,
delay, modify, inject, or drop packets. It runs deep packet inspection that
classifies protocols, extracts metadata (SNI, DNS, certificate fields), and
inspects QUIC Initial packets to filter by domain~\cite{gfw_quic}. It computes
flow-level statistics and actively probes suspected endpoints~\cite{gfw_hidden}.
Having observed an endpoint, the censor may block its \emph{address} and/or its
\emph{domain}, and blocks persist. A censor that later unblocks only helps the
defender, so persistent blocking is the conservative assumption.

\paragraph{Economic constraint (collateral freedom)}
The censor must preserve economic utility. Blocking a hyperscaler IP range to
suppress one service also blocks the legitimate workloads co-located there. We
model this as a bounded \emph{collateral budget} $\gamma$, the maximum fraction
of a provider's address space the censor will block before incurring
unacceptable domestic cost. $\gamma$ erodes where a domestic cloud ecosystem
reduces foreign dependence (notably China). Domains enjoy no such collateral
protection, since blocking a defender's registrable domain costs the censor
essentially nothing.

\paragraph{Provider cooperation is a correlated burn, not a separate failure mode}
Providers also act on the defender directly, through account suspensions, abuse
takedowns, payment controls, and rate limits. We do not treat this as out of band.
A suspension empties that provider's pool at once, which is exactly the correlated,
provider-scoped takedown analyzed in \S\ref{sec:bursty} and answered by diversifying
across independent providers (\S\ref{sec:multiprovider}). The one regime that
dissolves the premise is \emph{universal} provider collusion, where every host
de-platforms the service and collateral freedom fails entirely (the failure that
killed domain fronting~\cite{tls_circumvention}); that is a boundary condition on
the moving-target approach itself, outside any rotation policy, and we do not claim
to defend it.

\paragraph{Adaptive adversary}
We allow the censor to be adaptive, so it may revise its classifier and its
block/probe policy over time. We do not, however, require it to be omniscient. It
discovers endpoints and domains through the observable channels above, at rates
$\lama$ and $\lamdisc$ that summarize its discovery throughput. A more capable
censor is modeled by larger rates, and we report results as functions of these
rates rather than at a single operating point. Where the censor's true rates are
uncertain, we configure them \emph{adversary-favorably}, so the guarantees we
report are conservative.

\paragraph{Dynamics of the collateral budget}
$\gamma$ is not a constant of nature but a policy choice that shifts with the
domestic economy. It is large where foreign hyperscalers are economically load
bearing and erodes as a domestic cloud ecosystem grows (China's, and Russian
import substitution). Our analysis treats $\gamma$ as a parameter and reports
guarantees as functions of it. The qualitative conclusions hold across the
plausible range, and we show in \S\ref{sec:address} that for the address layer
\emph{any} $\gamma<1$ leaves rotation the winner, so the action is entirely on
the (collateral-free) domain layer.

\paragraph{Out of scope}
We do not provide sender anonymity, and a user's ISP sees connections to cloud
IPs. Users needing both can layer Tor over a moving-target entry point. We assume
the user can reach some cloud address initially (the bootstrap and discovery
problem is orthogonal and is treated by anti-enumeration directories, outside this
paper's scope). We also do not model per-flow traffic detectability. Our concern is
reachability under blocking, which composes with any probe- or analysis-resistant
transport.

\section{The Rotation Game}\label{sec:game}
We formalize the recurring intuition that a defender should ``rotate faster than
the censor blocks'' as a continuous-time timing game.

\begin{definition}[Moving-target censorship game]\label{def:game}
Let $\mathcal{A}$ be the address space ($|\mathcal{A}|=N$, the IP addresses an
endpoint can be reached at) and $\mathcal{D}$ the domain space ($|\mathcal{D}|=M$,
the registrable names it is reached by). The \emph{defender} maintains $n$ active
endpoints. Each endpoint independently refreshes its address (and sub-domain) at
rate $\mu$ (mean interval $\Delta t=1/\mu$), and the defender mints fresh
registrable domains at rate $\lamintro$. The \emph{censor} observes each active
endpoint through discovery (active probing, traffic correlation) at rate $\lama$
per endpoint and discovers/blocks registrable domains at aggregate rate
$\lamdisc$. On discovery the censor may add the address to a blocklist
$\Bip\subseteq\mathcal{A}$ (subject to the collateral constraint
$|\Bip|\le\gamma N$) and/or the domain to $\Bdom\subseteq\mathcal{D}$. Blocks
persist. An endpoint is \emph{reachable} at time $t$ if neither its address nor
its domain is blocked. The defender's payoff is the time-average fraction of $t$
at which $\ge1$ endpoint is reachable.
\end{definition}

\begin{definition}[$(\alpha,T)$-availability]\label{def:avail}
A configuration is $(\alpha,T)$-available if, against the censor's best response,
the probability that at least one endpoint is reachable throughout any interval
of length $T$ is at least $\alpha$.
\end{definition}

Two observations sharpen the problem and depart from prior framings. First,
because $N$ is effectively unbounded ($>10^8$) and addresses are cheap, address
exhaustion is not the binding constraint, a fact prior systems exploit
implicitly but never state. Second, since the GFW now filters QUIC and TLS by
domain~\cite{gfw_quic} and domain blocklisting is the dominant
persistent mode~\cite{gfwatch}, the scarce, costly resource is the \emph{domain}
pool. Domains require registration and certificate issuance, and once burned are
blocked persistently. This motivates the central quantity of this paper, the
\emph{domain burn rate}
\begin{equation}\label{eq:beta}
  \beta \;=\; \frac{\lamdisc}{\lamintro},
\end{equation}
the dimensionless ratio of the rate at which the censor discovers and blocks
defender domains to the rate at which the defender introduces fresh ones.

\paragraph{Two resources, two timescales}
The game decomposes into a \emph{fast} address layer and a \emph{slow} domain
layer. On the fast layer, each endpoint's address races discovery ($\lama$)
against rotation ($\mu$). On the slow layer, the defender's registrable-domain
economy races minting ($\lamintro$) against burning ($\lamdisc$). A modern
defender can rotate sub-domains and IPs cheaply (free-tier platform sub-domains
co-located with millions of legitimate services, infrastructure-as-code
provisioning~\cite{terraform}), but a fresh \emph{registrable} domain costs money
and certificate issuance. The remainder of the paper shows that the slow layer
binds.

\paragraph{Modeling choices and assumptions}
Three choices make the game tractable while keeping it faithful. (i) We take
discovery and minting to be Poisson, which makes the layers Markov. Appendix~\ref{app:general}
removes this assumption and shows the geometric bound is distribution
free, while heavy-tailed (occasionally very slow) discovery only helps the
defender. (ii) We treat the $n$ endpoints as exchangeable and independent on the
address layer, which is exact when rotations draw fresh addresses
independently, as the design intends. (iii) We let a burned registrable domain be
replaced by a \emph{sub-domain} switch onto any other live domain, an essentially
free operation, so the costly event is minting a fresh registrable domain rather
than migrating an endpoint. This separation of cheap rotation from costly minting
is what makes $\beta$, and not $\mu$, the governing quantity, and it mirrors how
deployed systems actually use wildcard and free-tier platform sub-domains. We
relax the independence of burns (correlated registrar takedowns) in
\S\ref{sec:bursty}.

\paragraph{A worked micro-example}
Suppose a defender runs $n=8$ endpoints, rotates four times faster than discovery
($\mu/\lama=4$), holds a buffer of $\kmax=8$, and faces $\beta=0.8$. The address
layer is essentially perfect, since all eight addresses are simultaneously blocked
with probability $(1/5)^8\approx2.6\times10^{-6}$, and the domain pool sits near
full, empty only about $4\%$ of the time, so the system is reachable about $96\%$ of
the time \emph{provided} the defender keeps minting. Double the burn rate to
$\beta=1.6$: nothing about rotation changes, yet the pool drifts toward empty and
time-average availability falls to about $60\%$ (lower still over a sustained
session). The entire difference is the domain economy, which we now make precise.

\subsection{Relation to FlipIt and the Combinatorial Resource}
Our game generalizes FlipIt~\cite{flipit} in three ways that matter for censorship.
FlipIt contests a \emph{single} shared resource whose control alternates; ours is
\emph{combinatorial}, an (address, domain) pair with $|\mathcal{A}|$ effectively
unbounded, so the defender never re-contests a burned point and simply moves to a
fresh one. FlipIt's moves are symmetric, undoable ``flips''; here the censor's block
is \emph{persistent} (\S\ref{sec:threat}) while rotation is cheap, an asymmetry that
favors the mover with the larger resource. And we add a \emph{collateral budget}
$\gamma$ capping how much address space the censor may seize, with no FlipIt
analogue. Classical FlipIt is the degenerate case of a single non-replenishable
domain ($M=1$, $\lamintro=0$), where the pool is a pure death process and
availability collapses regardless of rotation speed; with a \emph{replenishable,
collateral-protected, combinatorial} resource the conclusion inverts on the address
layer and is governed, on the domain layer, by the single ratio $\beta$.

\section{The Address Resource Is Free}\label{sec:address}
Here ``free'' means abundant and cheap to replace rather than literally zero-cost. On a
commodity cloud a defender can provision a fresh instance with a new public IP in
seconds and at negligible marginal cost, drawing from a provider's pool of
$>10^8$ addresses. This is the sense in which the address resource is free, and it
is \emph{not} true of registrable domains (\S\ref{sec:domain}), which is why the
two resources behave so differently. We now formalize why address blocking is a
losing strategy. Each endpoint's address is a two-state continuous-time Markov
chain, moving from \emph{clear} to \emph{blocked} at the discovery rate $\lama$ and
from \emph{blocked} to \emph{clear} at the rotation rate $\mu$ (a rotation moves
the endpoint to a fresh address drawn from the unblocked pool, which has size
$\ge(1-\gamma)N$).

\begin{lemma}[Geometric address bound]\label{lem:geom}
In steady state, a single endpoint's address is blocked with probability
$p_{\mathrm{ip}}=\lama/(\lama+\mu)$. With $n$ independent endpoints, the
instantaneous address-denial probability is
\begin{equation}\label{eq:geom}
  \Pdeny^{\mathrm{ip}} \;=\; \prod_{i=1}^{n}\Pr[\text{endpoint $i$ blocked}]
  \;\le\; \left(\frac{\lama}{\lama+\mu}\right)^{n},
\end{equation}
which decays geometrically in $n$. The inequality is strict whenever the
collateral cap $|\Bip|\le\gamma N$ binds, since the censor cannot block every
discovered address.
\end{lemma}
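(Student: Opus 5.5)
The plan is to analyze one endpoint's address as the two-state continuous-time Markov chain described just before the statement, and then lift the result to $n$ endpoints through the exchangeability and independence assumption (ii) of \S\ref{sec:game}.

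\textbf{Single endpoint.} Write the generator on states $\{\text{clear},\text{blocked}\}$ with rate $\lama$ for clear$\to$blocked and rate $\mu$ for blocked$\to$clear. The chain is irreducible and finite, so it has a unique stationary distribution $\pi$. Balance across the single cut gives $\pi_{\text{clear}}\lama=\pi_{\text{blocked}}\mu$. Normalizing yields $\pi_{\text{blocked}}=\lama/(\lama+\mu)=p_{\mathrm{ip}}$.

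The only subtlety is whether rotation truly returns the endpoint to ``clear.'' The fresh address is drawn from the unblocked pool of size at least $(1-\gamma)N>0$. It is therefore clear at the moment of rotation, and it stays clear until a new discovery event. Because blocks persist, previously blocked addresses are never re-drawn, so the chain is exact and has no memory of $\Bip$.

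\textbf{$n$ endpoints.} Address-layer denial means every endpoint's address is blocked simultaneously. By assumption (ii), rotations draw fresh addresses independently, and each endpoint's discovery clock is a separate Poisson process at rate $\lama$. The $n$ chains are therefore independent, so the joint stationary law is the product and $\Pdeny^{\mathrm{ip}}=\prod_i \Pr[\text{endpoint } i\text{ blocked}]=p_{\mathrm{ip}}^n$. Geometric decay in $n$ follows because $p_{\mathrm{ip}}<1$ whenever $\mu>0$.

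\textbf{Collateral cap.} I expect this step to be the main obstacle. The idea is that the chain above models a censor that blocks every discovered address, which is the censor's best case. When $|\Bip|\le\gamma N$ binds, some discoveries cannot be converted into blocks.

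I would formalize this by coupling. Run the capped censor on the same discovery clocks as the uncapped one, and let it block a discovered address only if budget remains. Under this coupling, the capped blocked set is contained in the uncapped blocked set at every time. Each endpoint's blocked indicator is therefore stochastically dominated by the uncapped chain. This gives $\Pr[\text{endpoint } i\text{ blocked}]\le p_{\mathrm{ip}}$, and hence the inequality in \eqref{eq:geom}.

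For strictness, suppose the cap binds with positive stationary probability. Then on that event some discovery is declined, so endpoint $i$ stays clear with positive probability where the uncapped chain would be blocked. This makes the marginal strictly smaller.

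The delicate point is that the cap couples endpoints, since they share a budget. To avoid breaking the product form, I would state the bound through the coupling directly. The joint event ``all blocked'' under the cap is contained in the same event without the cap, and the uncapped event has probability exactly $p_{\mathrm{ip}}^n$ by independence. This yields $\Pdeny^{\mathrm{ip}}\le(\lama/(\lama+\mu))^n$ without needing the capped chains to be independent. Strictness then follows from the positive-probability declined-discovery event.
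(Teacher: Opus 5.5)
Your proposal is correct and follows the paper's route: the two-state chain with rates $\lama$ and $\mu$ gives $p_{\mathrm{ip}}=\lama/(\lama+\mu)$, independent fresh-address rotations give the product, and the collateral cap can only remove blocks. Your coupling is a more careful version of the paper's one-line remark that the cap ``only lowers $p_{\mathrm{ip}}$.'' It rightly bounds the joint all-blocked event directly rather than assuming independence, which the shared budget $|\Bip|\le\gamma N$ in fact breaks, a point the paper's own proof passes over.
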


\begin{proof}
The two-state chain has generator with off-diagonal rates $\lama$ (clear$\to$blocked)
and $\mu$ (blocked$\to$clear), and its stationary distribution gives
$p_{\mathrm{ip}}=\lama/(\lama+\mu)$. Endpoints rotate independently to fresh
addresses, so the $n$ block events are independent and the product bound follows.
A fresh address is drawn from the unblocked pool of size $\ge(1-\gamma)N$, and once
$|\Bip|$ reaches $\gamma N$ the censor cannot block newly discovered addresses,
which only lowers $p_{\mathrm{ip}}$.
\end{proof}

\begin{corollary}\label{cor:redundancy}
For any target $\varepsilon\in(0,1)$, address-denial is below $\varepsilon$ once
$n\ge \ln(1/\varepsilon)\,/\,\ln\!\big((\lama+\mu)/\lama\big)$. Even when the
censor discovers endpoints faster than they rotate ($\lama\gtrsim\mu$), modest
redundancy drives denial to zero.
\end{corollary}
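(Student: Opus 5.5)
The plan is to derive Corollary~\ref{cor:redundancy} directly from the geometric bound of Lemma~\ref{lem:geom}, by solving for the smallest $n$ that pushes the right-hand side of \eqref{eq:geom} below $\varepsilon$. Write $q=\lama/(\lama+\mu)$. Since $\lama>0$ and $\mu>0$, we have $q\in(0,1)$, so $\ln(1/q)=\ln\!\big((\lama+\mu)/\lama\big)>0$. By Lemma~\ref{lem:geom}, $\Pdeny^{\mathrm{ip}}\le q^{n}$. It therefore suffices to show $q^{n}\le\varepsilon$, and $\Pdeny^{\mathrm{ip}}\le\varepsilon$ will follow.

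The key step is a monotone log transformation. The condition $q^{n}\le\varepsilon$ is equivalent to $n\ln q\le\ln\varepsilon$. Dividing by the negative number $\ln q$ reverses the inequality, giving $n\ge \ln\varepsilon/\ln q=\ln(1/\varepsilon)/\ln\!\big((\lama+\mu)/\lama\big)$, which is exactly the stated threshold. Because $\varepsilon\in(0,1)$, the threshold is positive and finite, so such integers $n$ exist. For strict inequality, as in ``below $\varepsilon$'', I would either take $n$ strictly above the threshold or invoke the strict case of Lemma~\ref{lem:geom} when the collateral cap binds. Otherwise I would simply read ``below'' as $\le$, which is the natural reading.

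For the qualitative second sentence, I would specialize to the regime $\lama\gtrsim\mu$. Set $r=\mu/\lama$. The threshold then becomes $\ln(1/\varepsilon)/\ln(1+r)$.
\begin{itemize}
\item For $r=1$ this is $\log_2(1/\varepsilon)$, for example about $20$ endpoints for $\varepsilon=10^{-6}$.
\item For small $r$, using $\ln(1+r)\ge r/(1+r)$, it is at most $(1+r)\ln(1/\varepsilon)/r$.
\end{itemize}
In either case the required redundancy grows only logarithmically in $1/\varepsilon$. Hence $q^{n}\to0$ as $n\to\infty$ for any fixed positive $\mu$, and ``modest redundancy drives denial to zero'' in the limiting sense.

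There is no real obstacle here. The only point needing care is the sign flip when dividing by $\ln q<0$, together with the implicit assumption $\mu>0$: if $\mu=0$, then $q=1$ and the bound is vacuous. I would state that assumption explicitly at the outset.
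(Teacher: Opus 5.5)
Your proof is correct and follows the same route the paper intends. The paper states the corollary without a separate proof, as an immediate inversion of the bound in Lemma~\ref{lem:geom}: take logarithms in $(\lama/(\lama+\mu))^n\le\varepsilon$ and flip the inequality when dividing by the negative quantity $\ln(\lama/(\lama+\mu))$. Your explicit handling of the degenerate case $\mu=0$, and of strict versus non-strict ``below'' when the threshold is an integer, are sensible refinements that the paper leaves implicit.
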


Lemma~\ref{lem:geom} explains why address blocking is a losing strategy for the
censor. Addresses are free to the defender and, under the collateral budget
$\gamma$, the censor cannot even block everything it discovers, let alone keep
pace with rotation. The substantive analysis therefore concerns the domain
resource.

\section{The Domain Resource Is Binding}\label{sec:domain}
\paragraph{The blockable unit, and what it costs}
The domain resource is the \emph{blockable unit}: the finest name at which the
censor can place a persistent block without unacceptable collateral~\cite{gfw_quic,
gfwatch}. Which name plays that role depends on where the defender hosts, and three
cases recur. \emph{(a) A defender-owned registrable domain} (\texttt{example.com}):
the censor blocks at the registrable level, so the domain and every sub-domain
under it share one fate. It is therefore a single unit, costs a few dollars a year,
and its sub-domains supply free SNI variation but burn together. \emph{(b) A
sub-domain of such a domain} is consequently \emph{not} an independent unit.
\emph{(c) A free-tier platform sub-domain}
($\ast$\texttt{.}\emph{platform}, co-located with millions of tenants): here the
censor cannot block the platform suffix without crippling collateral, so it must
block the specific sub-domain by SNI. Each such sub-domain is then an independent
unit, near-free to mint, that inherits the platform's collateral protection. In
every case a fresh unit must be registered or provisioned and given a
browser-trusted certificate, and once blocked it is persistently spent, so minting
a fresh unit, not switching SNI within a live one, is the costly operation. The
rates $\lamintro$ and $\lamdisc$ meter that unit, and a real economy mixes
(a) and (c): a few self-controlled registrable domains plus many cheap,
collateral-protected platform sub-domains.

The binding dynamics are thus a second timing game on $\mathcal{D}$. The defender
holds a buffer of at most $\kmax$ live, unblocked units, a stockpile knob where
more buffer costs more. Let $K(t)$ be the number of live unblocked units. Fresh
units are minted at rate $\lamintro$ (while $K<\kmax$), and the censor burns a live
unit at rate $\lamdisc$ (while $K\ge1$). Endpoints draw names from the live pool,
and when a unit is burned, the endpoints on it migrate to another live unit if one
exists, a cheap switch (the costly event is \emph{minting}, not migrating).

\begin{proposition}[Domain-pool dichotomy]\label{prop:dichotomy}
$K(t)$ is a birth--death chain~\cite{kleinrock} on $\{0,\dots,\kmax\}$ with birth
rate $\lamintro$ and death rate $\lamdisc$. Its stationary distribution is
$\pi_k\propto(1/\beta)^k$, giving
\begin{equation}\label{eq:pi0}
  \pi_0 \;=\; \Pr[K=0] \;=\;
  \begin{cases}
    \dfrac{1-1/\beta}{1-(1/\beta)^{\kmax+1}}, & \beta\neq1,\\[2.2ex]
    \dfrac{1}{\kmax+1}, & \beta=1.
  \end{cases}
\end{equation}
For $\beta<1$ the chain drifts toward a full buffer ($\pi_0$ is exponentially
small in $\kmax$), while for $\beta>1$ it concentrates near $0$.
\end{proposition}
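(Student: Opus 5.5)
The plan is to verify the birth--death structure directly from the model of \S\ref{sec:domain} and then read off the stationary law from detailed balance.

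\textbf{Markov structure.} Under the Poisson assumption (i) of \S\ref{sec:game}, minting and burning are independent exponential clocks. From state $k<\kmax$ the only upward transition is a mint, at rate $\lamintro$, because minting is suppressed at $\kmax$. From state $k\ge1$ the only downward transition is a burn, at rate $\lamdisc$, because the censor burns at aggregate rate $\lamdisc$ whenever a live unit exists. Migration of endpoints between live units does not change $K$, so it plays no role. Simultaneous jumps have probability zero, so all transitions are to nearest neighbours. This makes $K(t)$ a continuous-time birth--death chain on $\{0,\dots,\kmax\}$ with constant rates. The chain is finite and irreducible, since both rates are positive.

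\textbf{Stationary law.} Because the state space is one-dimensional with nearest-neighbour moves, the chain is reversible. The stationary distribution is therefore the unique solution of the detailed-balance equations
\[
\pi_k\lamintro=\pi_{k+1}\lamdisc, \qquad 0\le k<\kmax .
\]
These give $\pi_{k+1}=\pi_k/\beta$, and hence $\pi_k=\pi_0\,\beta^{-k}$. Normalizing uses the finite geometric sum $\sum_{k=0}^{\kmax}\beta^{-k}$, which equals $(1-\beta^{-(\kmax+1)})/(1-1/\beta)$ when $\beta\ne1$ and $\kmax+1$ when $\beta=1$. Taking the reciprocal gives \eqref{eq:pi0}.

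\textbf{Dichotomy.} For $\beta<1$, set $r=1/\beta>1$ and rewrite $\pi_0=(r-1)/(r^{\kmax+1}-1)\le (r-1)r^{-\kmax}/(r-1/r^{\kmax})$. This is $\Theta(r^{-\kmax})$, so $\pi_0$ is exponentially small in $\kmax$. In the same regime $\pi_k$ is increasing in $k$, so mass concentrates at the full buffer.

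For $\beta>1$, $\pi_k$ decreases geometrically in $k$, and $\pi_0\to 1-1/\beta>0$ as $\kmax\to\infty$. In fact $\pi_0\ge1-1/\beta$ for every $\kmax$, since the denominator $1-\beta^{-(\kmax+1)}$ is less than $1$. So the mass concentrates near $0$ and, in particular, stays bounded away from zero regardless of buffer size. This is the hook that Theorem~\ref{thm:impossible} will later use.

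\textbf{Main obstacle.} The only delicate point is not algebraic; it is justifying that the aggregate burn rate is exactly $\lamdisc$ independent of $k$. That is a modeling convention: the censor's discovery throughput is the bottleneck, not the number of live targets. I would state it explicitly as part of the model. A $k$-dependent rate $\lamdisc(k)$ would give the product formula $\pi_k\propto\prod_j \lamintro/\lamdisc(j)$ instead, which the state-dependent extension can treat separately.
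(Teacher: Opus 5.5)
Your proposal is correct and follows essentially the same route as the paper: detailed balance $\pi_k\lamintro=\pi_{k+1}\lamdisc$, then normalizing the finite geometric series. The only difference is the dichotomy, which you read off the stationary law (the $\Theta(r^{-\kmax})$ estimate, where your displayed ``$\le$'' is actually an identity, and the bound $\pi_0\ge 1-1/\beta$); the paper instead gives a one-line appeal to the drift $\lamintro(1-\beta)$, so your version justifies the stated claims more directly.
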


\begin{proof}
Detailed balance gives $\pi_k\lamintro=\pi_{k+1}\lamdisc$, so
$\pi_{k+1}=\pi_k(\lamintro/\lamdisc)=\pi_k/\beta$. Normalizing the geometric
series $\sum_{k=0}^{\kmax}(1/\beta)^k$ yields~\eqref{eq:pi0}. The drift of $K$ is
$\lamintro-\lamdisc=\lamintro(1-\beta)$, positive iff $\beta<1$.
\end{proof}

The system is reachable iff a live domain exists \emph{and} some endpoint has a
clear address. Because the two layers evolve independently, availability
factorizes.

\begin{theorem}[Closed-form availability]\label{thm:closedform}
The stationary time-average availability of a configuration $(n,\mu,\lama,\beta,\kmax)$ is
\begin{equation}\label{eq:closedform}
  A \;=\; \underbrace{\big(1-\pi_0(\beta,\kmax)\big)}_{\text{domain layer}}
       \cdot \underbrace{\Big(1-\big(\tfrac{\lama}{\lama+\mu}\big)^{n}\Big)}_{\text{address layer}}.
\end{equation}
\end{theorem}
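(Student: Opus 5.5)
The plan is to model the system as a product of two independent continuous-time Markov chains and read availability off the product stationary distribution. The state is $(K(t),X_1(t),\dots,X_n(t))$, where $K$ is the domain-pool chain of Proposition~\ref{prop:dichotomy} and $X_i\in\{\text{clear},\text{blocked}\}$ is the address chain of endpoint $i$ from Lemma~\ref{lem:geom}. Under the modeling choices in \S\ref{sec:game}, the components are driven by disjoint independent Poisson clocks: minting and burning drive $K$, while discovery at rate $\lama$ and rotation at rate $\mu$ drive each $X_i$. Assumption (iii) is what keeps them decoupled. A domain burn triggers only a sub-domain migration onto another live unit, which leaves every address state unchanged. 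An address rotation leaves $K$ unchanged.

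First I would establish the independence claim formally. The joint generator is a Kronecker sum of the component generators, so the product of the component stationary laws is stationary for the joint chain. Each component is finite and irreducible (for $\kmax\ge1$, $\lamintro,\lamdisc>0$, $\lama,\mu>0$), so the joint chain is irreducible and this product law is its unique stationary law $\pi$.

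Second, I would express reachability as an event in this state. The system is reachable iff $K\ge1$ and at least one $X_i$ is clear. The reason is that, by the migration rule, whenever some unit is live every endpoint can draw its name from a live unit. Under the product law, the probability of this event is $\Pr[K\ge1]\cdot\Pr[\exists i:\,X_i=\text{clear}]$. This equals $(1-\pi_0)\big(1-(\lama/(\lama+\mu))^n\big)$, using \eqref{eq:pi0} and the per-endpoint stationary blocking probability of Lemma~\ref{lem:geom}. Here I take the uncapped case, in which the product in \eqref{eq:geom} holds with equality. Third, the ergodic theorem for finite irreducible CTMCs converts this stationary probability into the almost-sure time-average fraction of $t$ at which the system is reachable, which is the defender's payoff in Definition~\ref{def:game}.

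The main obstacle is justifying independence honestly. Two couplings threaten it. The first is the collateral cap $|\Bip|\le\gamma N$, which makes blocking events depend on the global blocklist. I would handle it as Lemma~\ref{lem:geom} does: interpret the formula as the exact value in the regime where the cap is not binding. When the cap binds, blocking is only reduced, so the address factor becomes a lower bound and $A$ is at least the stated expression. The second is the possibility that endpoints sharing a burned unit are correlated with their address states. Assumption (iii) rules this out, since migration is instantaneous and address-neutral. I would state this explicitly as the hypothesis under which the factorization is exact.
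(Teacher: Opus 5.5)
Your proposal is correct and follows the paper's own argument. The paper's proof, together with the product-form stationarity paragraph in Appendix~\ref{app:proofs}, likewise notes that the joint generator is a Kronecker sum of the domain-pool and address generators, so the stationary law factorizes and the reachability event $\{K\ge1\}\cap\{\text{some address clear}\}$ has probability $(1-\pi_0)(1-(\lama/(\lama+\mu))^n)$. The differences are cosmetic: you track the $n$ endpoint states individually rather than the lumped count $c$ of clear addresses, and you spell out the irreducibility and ergodic-theorem step and the uncapped-regime caveat that the paper leaves implicit.
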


\begin{proof}
Availability is the probability that $K\ge1$ and at least one of the $n$ addresses
is clear. The domain pool $K(t)$ and the $n$ address chains are driven by
independent Poisson clocks, so the two events are independent, with probabilities
$1-\pi_0$ (Proposition~\ref{prop:dichotomy}) and $1-(\lama/(\lama+\mu))^n$
(Lemma~\ref{lem:geom}), respectively.
\end{proof}

\begin{remark}[Correlated discovery and timescale separation]\label{rem:indep}
The factorization in~\eqref{eq:closedform} treats the address and domain layers as
independent, whereas a single discovery reveals an endpoint's address and its
domain together, so the two channels are correlated in time. The product form
nonetheless holds to high accuracy by \emph{timescale separation}. Whenever
rotation is cheap, the address layer mixes on the fast timescale of $\mu$ and
$\lama$ while the domain pool evolves on the slow timescale of $\lamintro$ and
$\lamdisc$, so the address layer sits in quasi-stationarity at every instant of the
domain layer and the factorization is exact in the separation limit. More
importantly, the structural results that follow---the impossibility of pure address
rotation (Theorem~\ref{thm:impossible}) and the phase transition at the
frontier---depend only on the drift of the domain pool, not on independence, and so
are unaffected by correlation between the two discovery channels.
\end{remark}

Equation~\eqref{eq:closedform} already exposes the asymmetry. The address factor
is driven to $1$ by modest redundancy $n$, whereas the domain factor is
controlled entirely by $\beta$ and the buffer $\kmax$. The next result shows that
no amount of rotation speed can compensate once $\beta>1$.

\begin{theorem}[Impossibility of pure address rotation]\label{thm:impossible}
For $\beta>1$, regardless of rotation speed $\mu$, endpoint count $n$, or
buffer $\kmax$, the time-average availability obeys
\begin{equation}
  A \;\le\; 1-\pi_0(\beta,\kmax) \;\le\; \frac{1}{\beta} \;<\;1,
\end{equation}
and $1-\pi_0\to1/\beta$ as $\kmax\to\infty$, so the bound is tight. Availability is
therefore bounded away from $1$ for any rotation speed, and no
$(\alpha,T)$-availability with $\alpha>1/\beta$ is achievable. A defender that mints
fresh domains more slowly than the censor burns them cannot sustain high
availability under a domain-filtering censor. The value $\beta=1$ is the critical
boundary, and only $\beta<1$ drives the empty-pool probability $\pi_0$, and with it
the frontier of \S\ref{sec:frontier}, small.
\end{theorem}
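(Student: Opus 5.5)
The plan is to read the bound off the closed form of Theorem~\ref{thm:closedform} and the stationary law of Proposition~\ref{prop:dichotomy}, in three steps. First, since the address factor $1-(\lama/(\lama+\mu))^n$ lies in $[0,1]$ for every $\mu$ and $n$, \eqref{eq:closedform} immediately gives $A\le 1-\pi_0(\beta,\kmax)$, and the right side does not involve $\mu$ or $n$. This is the whole ``regardless of rotation speed'' content. To keep the argument from depending on the product form (Remark~\ref{rem:indep}), I would also note the direct inequality: reachability requires $K(t)\ge1$, so the time-average availability is at most the fraction of time with $K\ge1$. Ergodicity of the finite birth--death chain makes that fraction $1-\pi_0$.

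Second, I show $1-\pi_0\le 1/\beta$ for every $\kmax$ when $\beta>1$. Write $r=1/\beta\in(0,1)$. From \eqref{eq:pi0}, $\pi_0=(1-r)/(1-r^{\kmax+1})\ge 1-r$, because the denominator lies in $(0,1]$. Hence $1-\pi_0\le r=1/\beta<1$. Equivalently, $1-\pi_0=r(1-r^{\kmax})/(1-r^{\kmax+1})$, which is increasing in $\kmax$; that form shows monotonicity as well. Tightness follows since $r^{\kmax+1}\to0$, so $\pi_0\to1-r$ and $1-\pi_0\to1/\beta$.

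Third, I transfer the bound to $(\alpha,T)$-availability. The event ``some endpoint is reachable throughout $[t,t+T]$'' is contained in ``some endpoint is reachable at time $t$''. In stationarity the latter has probability $A\le1/\beta$, for any $T\ge0$. So no $\alpha>1/\beta$ is achievable, even before the censor optimizes, and a best-responding censor can only lower $\alpha$ further.

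The step I expect to need care is the last one. Definition~\ref{def:avail} is phrased over ``any interval'' rather than a stationary start. I would handle this by taking $t$ large and using convergence of the finite chain to $\pi$: any initial condition yields intervals whose success probability approaches the stationary value. The remaining sentences of the statement are interpretation. The critical value $\beta=1$ gives $\pi_0=1/(\kmax+1)$, and for $\beta<1$ Proposition~\ref{prop:dichotomy} shows that $\pi_0$ is exponentially small in $\kmax$.
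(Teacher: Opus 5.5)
Your proposal is correct and follows the paper's proof essentially step for step: bound the address factor by $1$ to get $A\le 1-\pi_0$, use that the denominator $1-(1/\beta)^{\kmax+1}$ lies in $(0,1)$ to get $\pi_0\ge 1-1/\beta$, let $\kmax\to\infty$ for tightness, and transfer the bound to $(\alpha,T)$-availability because being reachable throughout an interval implies being reachable at its start. Your additions make explicit points that the paper's short proof leaves implicit: the product-form-free bound via $K\ge1$, the monotonicity of $1-\pi_0$ in $\kmax$, and the appeal to convergence to stationarity for the ``any interval'' clause.
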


\begin{proof}
The address factor in~\eqref{eq:closedform} is at most $1$, so $A\le1-\pi_0$. For
$\beta>1$, $(1/\beta)^{\kmax+1}\in(0,1)$ and
$\pi_0=(1-1/\beta)/(1-(1/\beta)^{\kmax+1})\ge1-1/\beta$, whence
$1-\pi_0\le1/\beta<1$, and letting $\kmax\to\infty$ gives $1-\pi_0\to1/\beta$. Since
$A\le1/\beta$ and time-average availability upper bounds the probability that an
interval of positive length is fully reachable, $(\alpha,T)$-availability requires
$\alpha\le1/\beta$.
\end{proof}

Theorem~\ref{thm:impossible} is the formal content of the claim that rotating
faster is the wrong lever. Rotation speed $\mu$ appears only in the address
factor, which is already saturated by redundancy, and it is absent from the
binding domain factor.

\paragraph{What $\lamdisc$ is, and robustness to its form}
Here $\lamdisc$ is the censor's persistent-\emph{block} throughput, the rate at
which it commits durable blocks into its enforcement path, not its discovery rate.
Discovery can be near-instant, since certificate-transparency logs and passive DNS
expose issued names, but committing each block (a blocklist entry, a DPI rule, a
poisoned record) is capacity-limited, so treating $\lamdisc$ as an aggregate the
censor runs at full capacity is conservative. The constant-rate \emph{model} is
conservative in a second sense: if the censor instead blocks each live unit
independently at per-unit rate $\delta$, the aggregate burn $\delta K$ rises with
the pool, adding a stabilizing feedback.

\begin{proposition}[State-dependent discovery is favorable]\label{prop:statedep}
Under per-unit discovery (aggregate burn $\delta K$) the pool is the
truncated-Poisson chain $\pi_K\propto(\lamintro/\delta)^K/K!$ on
$\{0,\dots,\kmax\}$, and at a burn matched to the constant-rate model at the full
pool ($\delta\kmax=\lamdisc$) it is far more available. At $\kmax=8$, $\beta=1.6$
the constant-rate pool is empty $38\%$ of the time but the per-unit pool only
$0.7\%$. The constant-$\lamdisc$ frontier is thus a worst case among discovery
models that scale with pool occupancy.
\end{proposition}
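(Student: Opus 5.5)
The plan treats the per-unit model exactly as the constant-rate pool was treated in Proposition~\ref{prop:dichotomy}. I would first derive the stationary law, then evaluate it at the matched operating point, and finally prove the worst-case claim by a termwise comparison of the stationary ratios.

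\emph{Step 1 (stationary law).} Under per-unit discovery, $K(t)$ is still a birth--death chain on $\{0,\dots,\kmax\}$. Its birth rate is $\lamintro$ for $K<\kmax$, but its death rate is now $\delta k$ in state $k$. Detailed balance across the edge $k\leftrightarrow k+1$ reads $\pi_k\lamintro=\pi_{k+1}\,\delta(k+1)$. Iterating gives
\[
\pi_k=\pi_0\,\frac{\rho^k}{k!},\qquad \rho=\frac{\lamintro}{\delta},
\]
which is the truncated-Poisson law. Normalizing yields $\pi_0=\big(\sum_{k=0}^{\kmax}\rho^k/k!\big)^{-1}$. Reversibility holds automatically for a birth--death chain, so no further argument is needed.

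\emph{Step 2 (matched numerics).} Imposing $\delta\kmax=\lamdisc$ gives $\rho=\lamintro\kmax/\lamdisc=\kmax/\beta$.

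At $\kmax=8$ and $\beta=1.6$ this means $\rho=5$. The partial exponential sum is
\[
\sum_{k=0}^{8}5^k/k!\approx 1+5+12.5+20.8+26.0+26.0+21.7+15.5+9.7\approx138.3,
\]
so $\pi_0\approx0.72\%$.

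For the constant-rate model, \eqref{eq:pi0} gives
\[
\pi_0=\frac{0.375}{1-0.625^{9}}\approx\frac{0.375}{0.985}\approx38\%.
\]
This reproduces both quoted figures.

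\emph{Step 3 (worst case).} To make ``worst case among occupancy-scaling models'' precise, I would consider any death-rate profile $d(k)$ that is nondecreasing in $k$ and matched at the full pool, $d(\kmax)=\lamdisc$. Such a profile satisfies $d(k)\le\lamdisc$ for all $1\le k\le\kmax$. Detailed balance then gives, for each $k$,
\[
\frac{\pi_k}{\pi_0}=\prod_{j=1}^{k}\frac{\lamintro}{d(j)}\;\ge\;\Big(\frac{1}{\beta}\Big)^{k}.
\]
Summing over $k$ and inverting shows that the empty-pool probability under $d$ is at most the constant-rate $\pi_0(\beta,\kmax)$. By Theorem~\ref{thm:closedform}, the domain factor $1-\pi_0$ is then at least as large as under the constant-rate model. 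The address factor is unchanged, so the same comparison holds for $A$. The linear profile $d(k)=\delta k$ is a special case of this argument.

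The main obstacle is stating the worst-case claim at the right generality. Matching only at the full pool is exactly what makes the pointwise domination $d(k)\le\lamdisc$ available, and a censor whose per-unit rate rises as the pool shrinks would escape the comparison. I would therefore state the monotonicity hypothesis explicitly. I would also point out that the same termwise argument gives a stronger conclusion: the chain under $d$ stochastically dominates the constant-rate chain, so the frontier $\bstar$ can only move in the defender's favor.
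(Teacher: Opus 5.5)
Your proof is correct. Steps 1--2 coincide with the paper's own derivation in Appendix~\ref{app:general}: detailed balance $\pi_{K+1}=\pi_K\lamintro/(\delta(K+1))$, the truncated-Poisson law with $\pi_0=(\sum_{j\le\kmax}\rho^j/j!)^{-1}$, the matching $\rho=\kmax/\beta=5$, and the comparison $0.38$ versus $7\times10^{-3}$. Your arithmetic checks out: the partial sum is about $138.3$ and $0.625^9\approx0.0146$.

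Where you genuinely depart from the paper is Step 3. The paper never proves the worst-case sentence. It rests on that single numerical comparison plus the heuristic that the rising death rate $\delta K$ is a restoring force holding the pool near $\rho$. Your termwise domination argument runs as follows: any nondecreasing profile with $d(\kmax)=\lamdisc$ has $d(j)\le\lamdisc$, so $\pi_k/\pi_0\ge\beta^{-k}$, and the empty-pool probability is at most $\pi_0(\beta,\kmax)$. This turns the sentence into a proved statement. It also pins down what ``scales with pool occupancy'' must mean, exhibits the constant profile as the pointwise-maximal (hence worst) member of the class, and correctly flags the profiles that escape it. The paper's route buys brevity and intuition; yours buys rigor and generality.

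One caution concerns your closing remark. The ratio $\pi^{(d)}_k/\pi^{\mathrm{c}}_k\propto\prod_{j\le k}\lamdisc/d(j)$ is nondecreasing, which gives likelihood-ratio, hence stochastic, order of the \emph{stationary} laws. That settles the time-average frontier $\bstaravg$. It does not by itself settle the interval frontier $\bstar$, because interval availability depends on the excursion structure of $K(t)$ and not only on its stationary law. For $\bstar$ you need the sample-path version. The birth rates are equal and $d(k)\le\lamdisc$, so the standard monotone coupling of birth--death chains applies: births are joint, and when the chains coincide every $d$-death is paired with a constant-rate death. This keeps $K^{(d)}(t)\ge K^{\mathrm{c}}(t)$ for all $t$. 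With a shared address realization, the unreachable set under $d$ is then contained in the constant-rate one, and the interval metric is ordered for every $T$.
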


The genuinely adverse direction is the opposite, discovery that scales with the
defender's \emph{minting} (every issued certificate is logged), which pins $\beta$
near a constant the defender cannot lower by minting faster. The defenses are then
structural rather than rate-based: wildcard certificates that place many names
behind one logged entry, and diversification across providers with independent
enforcement paths (\S\ref{sec:bursty}). Appendix~\ref{app:general} derives the
chain and Figure~\ref{fig:robust}(c) confirms both directions in simulation.

\subsection{The Sustainability Frontier}\label{sec:frontier}
We now characterize the boundary between sustainable and unsustainable
configurations.

\begin{theorem}[Sustainability frontier]\label{thm:frontier}
Fix $(\alpha,n,\mu,\lama,\kmax)$ with the address factor
$c_a:=1-(\lama/(\lama+\mu))^n\ge\alpha$. Then time-average availability
$A\ge\alpha$ iff $\beta\le\bstar_{\mathrm{avg}}$, where the threshold
$\bstar_{\mathrm{avg}}(\alpha,n,\mu,\lama,\kmax)$ is the unique root of
\begin{equation}\label{eq:bstar}
  1-\pi_0(\beta,\kmax) \;=\; \frac{\alpha}{c_a}.
\end{equation}
It is increasing in $\kmax$ and in $n$, and $\bstar_{\mathrm{avg}}\to c_a/\alpha$
as $\kmax\to\infty$. If $c_a<\alpha$, no $\beta$ suffices.
\end{theorem}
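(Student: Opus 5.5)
The plan is to reduce everything to a one-variable monotonicity statement. By Theorem~\ref{thm:closedform}, $A = c_a\cdot g_{\kmax}(\beta)$ with $g_{\kmax}(\beta):=1-\pi_0(\beta,\kmax)$. Since $c_a>0$, the condition $A\ge\alpha$ is equivalent to $g_{\kmax}(\beta)\ge\alpha/c_a$. If $c_a<\alpha$, then $A\le c_a<\alpha$ for every $\beta$, because $g_{\kmax}\le1$. This disposes of the last sentence of the theorem. So assume $c_a\ge\alpha$, so that the target $\alpha/c_a$ lies in $(0,1]$.

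\textbf{Monotonicity of $g$.} Put $r=1/\beta$ and $S_{\kmax}(r)=\sum_{k=0}^{\kmax} r^k$. Proposition~\ref{prop:dichotomy} gives $\pi_0 = 1/S_{\kmax}(r)$, so
\[
g_{\kmax}(\beta)=1-\frac{1}{S_{\kmax}(1/\beta)}.
\]
$S_{\kmax}$ is a polynomial with positive coefficients, so it is continuous and strictly increasing in $r>0$. Hence $g_{\kmax}$ is continuous and strictly decreasing in $\beta$, with the following limits:
\begin{itemize}
\item as $\beta\to0$ we have $S\to\infty$, so $g_{\kmax}\to1$;
\item as $\beta\to\infty$ we have $S\to1$, so $g_{\kmax}\to0$.
\end{itemize}
By the intermediate value theorem, equation~\eqref{eq:bstar} has exactly one root $\bstar_{\mathrm{avg}}$. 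Strict monotonicity then gives $g_{\kmax}(\beta)\ge\alpha/c_a$ iff $\beta\le\bstar_{\mathrm{avg}}$. The boundary case $\alpha=c_a$ (target equal to $1$) is attained only in the limit $\beta\to0$. I would state it as the degenerate root $\bstar_{\mathrm{avg}}=0$, or exclude it with a remark.

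\textbf{Comparative statics.} For fixed $r$, $S_{\kmax}(r)$ strictly increases in $\kmax$, since one more positive term is added. So the whole curve $g_{\kmax}$ shifts upward as $\kmax$ grows. Because each curve is decreasing, the crossing point with the fixed level $\alpha/c_a$ moves right, and $\bstar_{\mathrm{avg}}$ is increasing in $\kmax$. For $n$, only $c_a=1-(\lama/(\lama+\mu))^n$ depends on $n$, and it increases with $n$. The level $\alpha/c_a$ therefore drops, and again the root of a decreasing $g$ moves right.

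\textbf{Limit $\kmax\to\infty$ (main obstacle).} This step needs a limit interchange. Pointwise, $g_{\kmax}(\beta)\to g_\infty(\beta)$, where $g_\infty=1$ for $\beta\le1$ and $g_\infty=1/\beta$ for $\beta>1$, as computed in the proof of Theorem~\ref{thm:impossible}. The function $g_\infty$ is continuous and decreasing, and it equals $\alpha/c_a$ exactly at $\beta=c_a/\alpha\ge1$.

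Let $\beta_{\kmax}$ denote the root for buffer $\kmax$. These roots are nondecreasing in $\kmax$ by the previous step. They are bounded above by $c_a/\alpha$, because Theorem~\ref{thm:impossible} gives $g_{\kmax}(\beta)\le 1/\beta<\alpha/c_a$ for every $\beta>c_a/\alpha$. Hence $\beta_{\kmax}$ converges to some $\bar\beta\le c_a/\alpha$.

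To rule out $\bar\beta<c_a/\alpha$, pick $\beta'\in(\bar\beta,c_a/\alpha)$. Then $g_\infty(\beta')>\alpha/c_a$, so $g_{\kmax}(\beta')>\alpha/c_a$ for all large $\kmax$ by pointwise convergence. That forces $\beta_{\kmax}>\beta'>\bar\beta$, a contradiction. Therefore $\bstar_{\mathrm{avg}}\to c_a/\alpha$.
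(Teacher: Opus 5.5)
Your proof is correct and follows the same route as the paper. Like the paper, you factor $A=c_a(1-\pi_0)$, use strict monotonicity of $\pi_0$ in $\beta$ for existence, uniqueness and the ``iff'', use monotone shifts in $\kmax$ and in $c_a$ for the comparative statics, and use the pointwise limit $1-\pi_0\to1/\beta$ for the $\kmax\to\infty$ asymptotics. You are in fact more careful than the paper's short proof: you justify the limit interchange through the monotone, bounded sequence of roots, and you correctly flag that the boundary case $c_a=\alpha$ has no root in $(0,\infty)$.
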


\begin{proof}
$A$ in~\eqref{eq:closedform} is continuous and strictly decreasing in $\beta$
(since $\pi_0$ is increasing in $\beta$), so $A\ge\alpha$ on
$(0,\bstar_{\mathrm{avg}}]$ with $\bstar_{\mathrm{avg}}$ the solution of
$A=\alpha$, i.e.\ of~\eqref{eq:bstar}, and existence requires the right-hand side
$\le1$, i.e.\ $c_a\ge\alpha$. Monotonicity of $\pi_0$ in $\kmax$ and of $c_a$ in
$n$ gives the stated monotonicities. As $\kmax\to\infty$, $1-\pi_0\to1/\beta$ for
$\beta>1$, so the root of~\eqref{eq:bstar} tends to $c_a/\alpha$.
\end{proof}

The time-average frontier $\bstar_{\mathrm{avg}}$ thus lies near $1$, tending to
$c_a/\alpha\in[1,1/\alpha]$ for a strong address layer. Definition~\ref{def:avail}
asks for the stricter \emph{interval} guarantee, whose frontier
$\bstar:=\bstar(\alpha,T)$ no closed form captures exactly, since it depends on the
joint excursion structure of $K(t)$ and the address chains. We characterize
$\bstar$ empirically in \S\ref{sec:eval}. It satisfies
$\bstar\le\bstar_{\mathrm{avg}}$, is below $1$ in every configuration we test, and
rises toward $1$ as the buffer $\kmax$ grows. The operational conclusion is
therefore that \emph{sustained $(\alpha,T)$-availability requires keeping the
domain burn rate strictly below a frontier $\bstar\le1$}.

\subsection{The Censor's Best Response}\label{sec:stackelberg}
Treating the censor as a Stackelberg leader that allocates a fixed discovery
budget $B$ between address discovery ($\lama$) and domain discovery
($\lamdisc$), the analysis above pins down its optimal play.

\begin{proposition}[Optimal censor strategy]\label{prop:stackelberg}
Let the censor split a discovery budget $B=\lama+\lamdisc$ to minimize
availability~\eqref{eq:closedform}. The denial gain from address discovery,
$-\partial A/\partial\lama$, is proportional to $n(\lama/(\lama+\mu))^{n-1}$ times
the domain factor and vanishes geometrically in $n$ once redundancy saturates the
address factor. The denial gain from domain discovery,
$-\partial A/\partial\lamdisc=(1/\lamintro)\,c_a\,\partial\pi_0/\partial\beta$, does
not saturate while $\beta<1$. Hence for any $n$ above the redundancy threshold of
Corollary~\ref{cor:redundancy}, the censor's best response allocates essentially
all budget to the domain channel, driving $\beta$ upward, and the induced game
reduces to the domain race of~\eqref{eq:beta}.
\end{proposition}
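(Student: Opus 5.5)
The plan is to differentiate the closed-form availability~\eqref{eq:closedform} along the budget line and compare the two marginal denial gains. Write $A=D(\beta,\kmax)\,c_a(\lama)$ with $D:=1-\pi_0$, $c_a=1-p^n$, and $p:=\lama/(\lama+\mu)$.

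\emph{Address channel.} Since $\partial p/\partial\lama=\mu/(\lama+\mu)^2$, we get
\[
  -\partial A/\partial\lama = D\,n\,p^{n-1}\,\mu/(\lama+\mu)^2 ,
\]
which has exactly the stated form. For any fixed $p<1$, the factor $n p^{n-1}$ decays geometrically in $n$. Moreover, once $n$ exceeds the threshold of Corollary~\ref{cor:redundancy} for a target $\varepsilon$, we have $n p^{n-1}\le n\varepsilon/p$. This bound uses the corollary directly.

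\emph{Domain channel.} Here $\beta=\lamdisc/\lamintro$, so the chain rule gives
\[
  -\partial A/\partial\lamdisc = c_a\,(1/\lamintro)\,\partial\pi_0/\partial\beta .
\]
I will show $\partial\pi_0/\partial\beta>0$ and bounded below on any compact $\beta$-range. Write $\pi_0=1/\sum_{k=0}^{\kmax}\beta^{-k}$. Then
\[
  \partial\pi_0/\partial\beta=\pi_0^2\sum_{k=1}^{\kmax}k\,\beta^{-k-1}>0 .
\]
This expression does not depend on $n$. As $n$ grows, $c_a\to1$ rather than vanishing, so this marginal gain stays bounded away from zero uniformly in $n$. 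This is the sense of ``does not saturate while $\beta<1$'': on $\beta\le1$ the sum is at least $\kmax(\kmax+1)/2\cdot\pi_0^2$, and $\pi_0\ge$ its minimum on the range.

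\emph{Comparison and best response.} The censor minimizes $A(\lama,B-\lama)$. I plan to argue that for $n$ above a threshold $n_0(B,\mu,\lamintro,\kmax)$ the derivative
\[
  \frac{d A}{d\lama}= -(\text{address gain})+(\text{domain gain})
\]
is positive on all of $[0,B]$. If so, $A$ is increasing in $\lama$ and the minimizer is $\lama=0$, i.e., all budget goes to domains.

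The main obstacle is that $p$ depends on $\lama$, which ranges up to $B$. I would bound $p\le B/(B+\mu)<1$ uniformly, so $n p^{n-1}\le n\,(B/(B+\mu))^{n-1}$ uniformly. The domain gain is likewise uniformly bounded below on $\beta\in[0,B/\lamintro]$. Choosing $n$ so that the first bound is below the second yields the claim.

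I would phrase ``essentially all budget'' to allow a boundary solution. I would also note that if the uniform domination is too crude near $\lama=0$, the address gain there is exactly zero for $n\ge2$, which only helps. Substituting $\lama\approx0$ then gives $c_a\approx1$, and the residual objective is $D(\beta,\kmax)$ alone, i.e., the domain race of~\eqref{eq:beta}.
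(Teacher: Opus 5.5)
Your derivative computations for the two marginal gains are correct and match what the paper offers. The paper gives no separate proof, only these formulas in the statement plus the empirical check in Fig.~\ref{fig:budget}. The gap is in the global best-response step you add.

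You claim the domain gain is uniformly bounded below on $\beta\in[0,B/\lamintro]$, but it is not. Since $\pi_0=\beta^{\kmax}(1-\beta)/(1-\beta^{\kmax+1})\sim\beta^{\kmax}$ as $\beta\to0$, we get $\partial\pi_0/\partial\beta\sim\kmax\beta^{\kmax-1}\to0$ for every $\kmax\ge2$. Your bound $\pi_0^2\kmax(\kmax+1)/2$ degenerates there because $\pi_0\to0$. Write $p_B:=B/(B+\mu)$. At $\lama=B$ (so $\lamdisc=0$) the domain gain is $0$, while the address gain is $n\,p_B^{n-1}\mu/(B+\mu)^2>0$. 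Hence $dA/d\lama<0$ on a neighborhood of $\lama=B$ for \emph{every} $n$. So $A$ is never monotone on $[0,B]$, the all-address corner is always a boundary local minimum, and no choice of $n$ makes your uniform domination argument work. Your fallback remark concerns $\lama=0$, which is the harmless endpoint; the trouble is at the other end.

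The conclusion survives if you compare values rather than slopes near that corner. Fix $\delta\in(0,B)$.
\begin{itemize}
\item On $[0,B-\delta]$ one has $\beta\ge\delta/\lamintro$, so $\partial\pi_0/\partial\beta\ge m_\delta>0$ for some constant $m_\delta$, and $c_a\ge1-p_B$. The domain gain is then at least $(1-p_B)m_\delta/\lamintro$, while the address gain is at most $n p_B^{n-1}/\mu$. So $dA/d\lama>0$ on this interval for large $n$.
\item On $[B-\delta,B]$, monotonicity of $\pi_0$ in $\beta$ and of $c_a$ in $\lama$ gives $A(\lama)\ge(1-p_B^n)\big(1-\pi_0(\delta/\lamintro)\big)$. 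Because $\pi_0$ is strictly increasing, this exceeds $A(0)=1-\pi_0(B/\lamintro)$ once $n$ is large.
\end{itemize}
Together these make $\lama=0$ the unique global minimizer for $n\ge n_0(\delta,B,\mu,\lamintro,\kmax)$. The first-order ``gain'' comparison in the statement is only local; the best-response claim needs this global value comparison.
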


This justifies focusing the entire analysis on $\beta$. A rational censor,
recognizing that address blocking is defeated by redundancy (Lemma~\ref{lem:geom}),
concentrates on burning domains, which is what the GFW's 2024 QUIC-by-domain
filtering does in practice~\cite{gfw_quic}.

\section{The Censor--Defender Simulator}\label{sec:sim}
The durable contributions above are independent of any single cloud or censor
instance, so they are best tested in a reproducible environment any researcher
can rerun. Controlled study of nation-state censors at scale needs privileged
access or risky in-country deployment, and real-world confounds (ISP variation,
time-of-day, political events) preclude the ablations science requires. We
therefore validate in an open, event-driven \emph{censor--defender simulator}.

\paragraph{Design}
The simulator is a continuous-time (Gillespie / next-event) Markov simulation of
Definition~\ref{def:game}. It maintains the domain pool $K(t)$ as a capped
birth--death chain and the $n$ address chains (collapsed to a count, since
endpoints are exchangeable), and after a warm-up transient it records three
quantities. These are (i) the time-average availability, (ii) the
$(\alpha,T)$-interval availability of Definition~\ref{def:avail}, computed exactly
from the simulated up/down excursion structure (Appendix~\ref{app:proofs}), and (iii) diagnostics
(mean live domains, pool-empty probability, empirical address denial). Each event
is scalar arithmetic, so long horizons and many seeds run in well under a second.
Because the address and domain layers are driven by independent clocks
(Appendix~\ref{app:proofs}), the simulator reuses one address realization across an entire domain
sweep, which is the bulk of its speed. The full sweeps in this paper complete in a
few minutes on one core, with no GPU, cloud, or specialized hardware. Full
implementation and the data-fitting calibration procedure are in
Appendix~\ref{app:sim}.

\paragraph{What the simulator is and is not}
The simulator is a faithful realization of the \emph{model} rather than a network
emulator. It does not replay packets, model TCP dynamics, or emulate DPI
internals, by design. The durable claims are properties of the timing game, namely
the geometric address bound, the birth--death dichotomy, and the phase transition,
which are statements about the model and are therefore what a model-level simulator
can validate. Mechanism-level questions a model cannot answer (QUIC migration
downtime, per-flow detectability, registry latency) are out of scope here and
belong to a packet-level testbed. Working at the model level is also what lets the
closed-form law and the simulation cross-check each other
(Table~\ref{tab:validation}) rather than both depending on the same opaque
emulation.

\paragraph{Adversary profiles}
Adversary \emph{mechanisms} are fixed to documented designs, and their free rates
are expressed in dimensionless form so they transfer to any wall-clock setting. We
emphasize that the three profiles in Table~\ref{tab:censors} are \emph{plausible
operating points}, not fits to measurement data: the GFW-like profile has fast
active probing and aggressive, collateral-eroded domain
filtering~\cite{gfw_quic,gfw_hidden}, the TSPU-like profile has slower per-endpoint
discovery~\cite{tspu}, and the Iran-like profile follows protocol
whitelisting~\cite{iran_whitelister}. We do not claim these reproduce any specific
censor at a point in time; rather, we report every result as a function of the
adversary parameters over a sensitivity sweep, so the qualitative content survives
mis-estimation of any one of them, and we configure the profiles \emph{favorably}
(faster discovery, larger $\gamma$) where uncertain, keeping guarantees
conservative. The dimensionless ranges we sweep do bracket published behavior: the
GFW acts on a discovered endpoint within minutes~\cite{gfw_hidden,gfw_quic}, Tor
bridges are enumerated over days to weeks~\cite{bridge_discovery}, and the deployed
moving-target resolver NinjaDoH held $>\!95\%$ availability at $60$\,s rotation
under $30$\,s discovery~\cite{ninjadoh}, all of which lie inside our sweeps. The
simulator is designed to be calibrated in the strict sense, its free rates fit to
public OONI~\cite{ooni} and Censored Planet~\cite{censoredplanet,iclab} archives by
replaying historical measurement events, and we leave that data fitting to
deployment.

\begin{table}[h]
\renewcommand{\arraystretch}{1.2}
\caption{Adversary profiles (dimensionless, with rotation rate $\mu=3$).}
\label{tab:censors}
\centering
\footnotesize
\begin{tabular}{lccc}
\toprule
Censor & address disc.\ $\lama$ & $\mu/\lama$ & collateral $\gamma$ \\
\midrule
GFW (China)   & $1.5$ & $2.00$ & $0.20$ \\
TSPU (Russia) & $0.8$ & $3.75$ & $0.05$ \\
Iran          & $1.0$ & $3.00$ & $0.10$ \\
\bottomrule
\end{tabular}
\end{table}

\section{Evaluation}\label{sec:eval}
We confirm each analytical result in the simulator. This is model-level
validation: the closed-form analysis and an independent simulation agree, and the
qualitative predictions survive stresses the closed form omits (state-dependent
discovery, bursty and provider-correlated burns); it is not a fit to a specific
censor, which would need the measurement-archive calibration we leave to deployment
(\S\ref{sec:threats_val}). Unless noted, the canonical configuration is $n=8$
endpoints, $\mu/\lama=3$, buffer $\kmax=8$, target $\alpha=0.95$, averaged over
independent seeds with standard-deviation bands.

\subsection{Address Rotation Defeats an Address-Only Censor}
Figure~\ref{fig:ip} plots the empirical address-denial probability against the
number of endpoints $n$, for several rotation-to-discovery ratios $\mu/\lama$,
overlaid on the geometric bound of Lemma~\ref{lem:geom}. The simulation matches
the closed form to three significant figures, and denial decays geometrically.
When discovery is as fast as rotation ($\mu/\lama=1$), \PHaddrone{} endpoints
drive denial below $10^{-2}$, and even when discovery is twice as fast as
rotation ($\mu/\lama=0.5$) modest redundancy keeps it small. Address exhaustion is
not the binding constraint.

\begin{figure}[h]
\centering
\includegraphics[width=0.92\columnwidth]{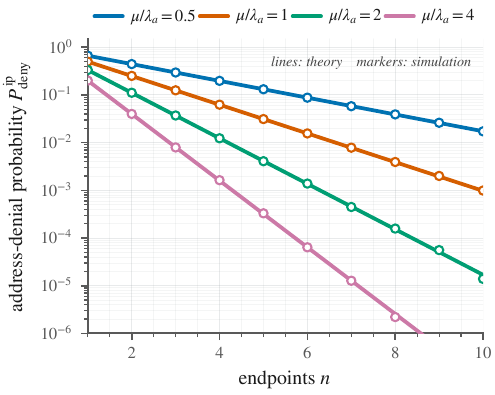}
\caption{Address blocking is a losing strategy. Denial probability
$\Pdeny^{\mathrm{ip}}$ decays geometrically in the number of endpoints $n$, and
simulation (markers) matches the geometric bound of Lemma~\ref{lem:geom}
(lines). Even when the censor discovers endpoints faster than they rotate
($\mu/\lama<1$), modest redundancy drives denial to zero.}
\label{fig:ip}
\end{figure}

\subsection{The Phase Transition at $\bstar$}
Figure~\ref{fig:phase} is the central result. As the domain burn rate $\beta$
increases, $(\alpha,T)$-availability undergoes a sharp phase transition, staying
near one while $\beta<\bstar$ and collapsing past it. Larger domain buffers
$\kmax$ push the frontier toward $\beta=1$ and sharpen the transition, as
Theorem~\ref{thm:frontier} predicts ($\bstar$ increasing in $\kmax$, $\bstar\to1$
as $\kmax\to\infty$). The empirical interval frontier lies below the
time-average crossing of the closed-form law~\eqref{eq:closedform}, confirming that
the stricter interval guarantee of Definition~\ref{def:avail} is the binding one. For
the canonical buffer $\kmax=8$ the interval frontier is $\bstar\approx\PHbstar$,
versus a time-average crossing near $\beta\approx\PHbstarTA$.

\begin{figure}[h]
\centering
\includegraphics[width=0.92\columnwidth]{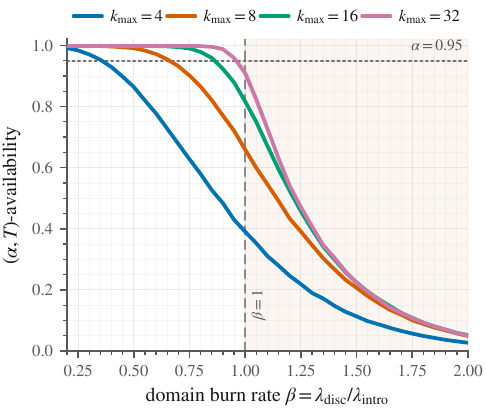}
\caption{The sustainability frontier. $(\alpha,T)$-availability vs.\ the domain
burn rate $\beta=\lamdisc/\lamintro$, for domain buffers
$\kmax\in\{4,8,16,32\}$. Availability collapses once $\beta$ crosses the frontier
$\bstar\le1$, and a larger domain buffer moves $\bstar$ toward $1$ and sharpens the
transition. The dotted line is the $(\alpha,T)$ target and the dashed vertical
marks $\beta=1$.}
\label{fig:phase}
\end{figure}

\subsection{The Cost of a Rotation-Only Strategy}
Pricing the conventional playbook makes the consequence concrete. A defender that
maximizes rotation speed, grows the IP pool, and treats domains as an afterthought
operates at, say, $\beta=1.5$; by Figure~\ref{fig:phase} its $(\alpha,T)$-availability
is below $0.25$ even at the largest buffer and rotation speed we test. A
\emph{domain-aware} defender holding $\beta=0.6$ with the \emph{same} $n$, $\mu$,
and pool sits above $0.95$. The two differ only in the one parameter the
rotation-only playbook ignores, the domain-introduction rate, and the failure is
hard to detect, since address-layer metrics stay healthy until the censor's burn
rate crosses $\lamintro$, where availability drops sharply with no warning on the
rotation side.

\subsection{Validation of the Closed-Form Law}
Before drawing operational conclusions we check that the simulator and the
analysis agree. Table~\ref{tab:validation} compares the time-average availability
predicted by the closed-form law~\eqref{eq:closedform} against the simulated value
across the full range of $\beta$ at the canonical buffer $\kmax=8$. The two agree
to three significant figures everywhere, confirming both the product-form
factorization of Theorem~\ref{thm:closedform} and the correctness of the
event-driven simulator. The table also reports the simulated \emph{interval}
availability, which is uniformly lower than the time-average value and drops faster
as $\beta$ grows, the quantitative statement that Definition~\ref{def:avail}'s
interval guarantee is the binding one, and the reason the operational frontier
$\bstar$ sits well below the time-average crossing.

\begin{table}[h]
\renewcommand{\arraystretch}{1.15}
\caption{Closed-form law~\eqref{eq:closedform} vs.\ simulation
($\kmax=8$, canonical config). Time-average availability matches to three
significant figures, and the interval metric is the stricter, binding one.}
\label{tab:validation}
\centering
\footnotesize
\setlength{\tabcolsep}{4pt}
\begin{tabular}{lcccccccc}
\toprule
$\beta$       & $0.4$ & $0.6$ & $0.8$ & $0.9$ & $1.0$ & $1.2$ & $1.5$ & $2.0$ \\
\midrule
closed form   & $1.00$ & $.993$ & $.961$ & $.930$ & $.889$ & $.793$ & $.658$ & $.499$ \\
sim.\ (avg)   & $1.00$ & $.993$ & $.961$ & $.930$ & $.890$ & $.794$ & $.659$ & $.500$ \\
sim.\ (intvl) & $.997$ & $.971$ & $.865$ & $.772$ & $.662$ & $.439$ & $.208$ & $.052$ \\
\bottomrule
\end{tabular}
\end{table}

\subsection{Rotation Speed Is Not the Lever}
Figure~\ref{fig:mu} sweeps availability over the $(\beta,\,\mu/\lama)$ plane.
Availability is organized almost entirely by $\beta$, the horizontal axis. Once
rotation is modestly faster than discovery ($\mu/\lama\gtrsim1$), increasing it
further, which moves \emph{up} the figure, does not cross the frontier. No rotation
speed restores availability for $\beta\ge1$. This is the empirical counterpart of
Theorem~\ref{thm:impossible}, since $\mu$ saturates the address layer but is absent
from the binding domain layer.

\begin{figure}[h]
\centering
\includegraphics[width=0.95\columnwidth]{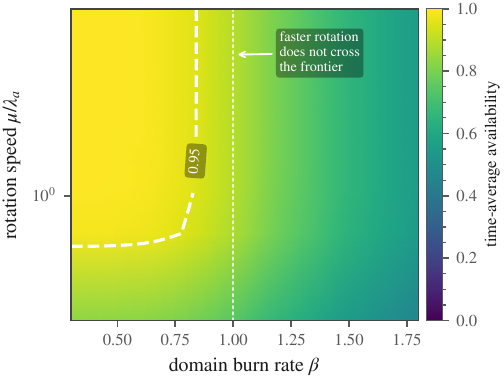}
\caption{Rotation speed is not the lever. Availability over the
$(\beta,\mu/\lama)$ plane, with the white contour at $0.95$. The structure is
vertical, so crossing the frontier requires lowering $\beta$ rather than raising
rotation speed.}
\label{fig:mu}
\end{figure}

\subsection{Longer Sessions Tighten the Frontier}
The interval guarantee of Definition~\ref{def:avail} depends on the session
length $T$, because a user holding a single connection for longer is more likely to
be interrupted by a transient domain-pool depletion. Figure~\ref{fig:session_curves}
shows $(\alpha,T)$-availability versus $\beta$ for session lengths $T$ spanning a
factor of twenty, and longer sessions shift the curve down and the frontier left.
Figure~\ref{fig:session_frontier} extracts the interval frontier
$\bstar(\alpha,T)$ over a dense range of $T$. For $\alpha=0.95$ it falls from
$\bstar\approx0.75$ at the shortest sessions to $\bstar\approx0.55$ at the longest,
and a stricter target $\alpha=0.99$ lowers it to $\bstar\approx0.35$ at the longest
sessions. The dependence is smooth and modest, since an order of magnitude in $T$
costs roughly $0.2$ in $\bstar$, so a defender sizing its domain economy can pick
$\beta$ from a single design curve once it fixes a target session length and
reliability. The renewal
approximation of Appendix~\ref{app:renewal} predicts these frontiers in closed form and tracks the
simulated values to within the grid resolution.

\begin{figure}[t]
\centering
\subfloat[Availability vs.\ $\beta$.\label{fig:session_curves}]{\includegraphics[width=0.49\columnwidth]{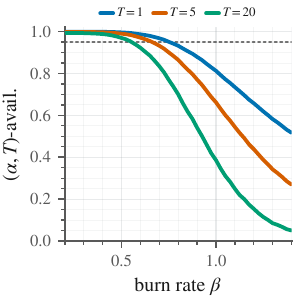}}\hfil
\subfloat[Frontier vs.\ $T$.\label{fig:session_frontier}]{\includegraphics[width=0.49\columnwidth]{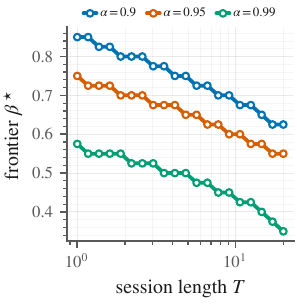}}
\caption{Longer sessions tighten the interval frontier (dotted line $\alpha=0.95$;
$T$ in units of the mean introduction interval). \textbf{(a)} $(\alpha,T)$-availability
vs.\ $\beta$ for $T\in\{1,5,20\}$; longer sessions shift the curve down and the
frontier left. \textbf{(b)} The extracted interval frontier $\bstar(\alpha,T)$
vs.\ $T$ (log axis); higher $\alpha$ and longer $T$ both lower $\bstar$, gently.}
\label{fig:session}
\end{figure}

\subsection{Robustness to Bursty Burns}\label{sec:bursty}
Our model burns domains one at a time, but real domain economies fail in
correlated batches. A registrar takedown or a certificate-authority revocation
can burn many of a defender's domains at once. We stress this by having the censor
burn $b$ domains per discovery event while \emph{holding the mean burn rate
fixed} (events fire $b\times$ less often), so $\beta$ is unchanged and only the
burst structure differs. Figure~\ref{fig:robust}(a) shows that burstiness shifts the
frontier sharply left. The interval frontier falls from $\bstar=0.65$ at $b=1$ to
$0.5$ at $b=2$ and $0.3$ at $b=4$, and at $b=8$ no $\beta$ in the swept range is
sustainable. Correlated burns are far more damaging than their mean rate suggests,
because a single event can empty the buffer outright. The practical implication is
that domain diversity, spreading endpoints across \emph{independent} registrars,
authorities, and platforms so that no single takedown is a batch burn, is as
important as the raw minting rate. A defender that mints fast but concentrates its
domains under one registrar operates at an effective $b\gg1$ and a correspondingly
collapsed frontier.

\begin{figure*}[t]
\centering
\subfloat[Bursty burns.\label{fig:rob_a}]{\includegraphics[width=0.245\textwidth]{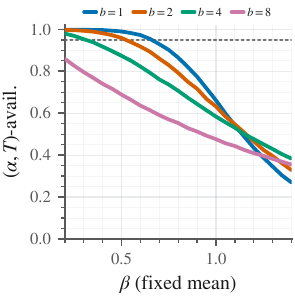}}\hfil
\subfloat[Diversification.\label{fig:rob_b}]{\includegraphics[width=0.245\textwidth]{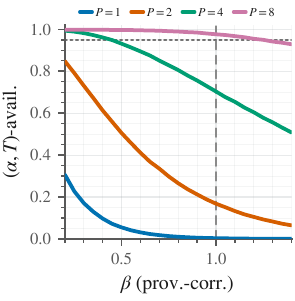}}\hfil
\subfloat[Discovery model.\label{fig:rob_c}]{\includegraphics[width=0.245\textwidth]{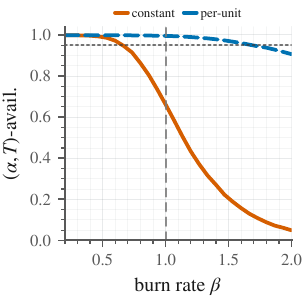}}\hfil
\subfloat[Diversification law.\label{fig:rob_d}]{\includegraphics[width=0.245\textwidth]{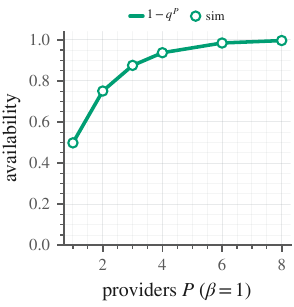}}
\caption{Robustness analyses, showing the frontier is not an artifact of the
constant-rate assumption (dotted line $\alpha=0.95$, dashed vertical $\beta=1$).
\textbf{(a)} Bursty burns shift the frontier left as the censor burns $b$ units per
event ($\bstar=0.65,0.5,0.3$ for $b=1,2,4$, none for $b=8$). \textbf{(b)} Diversifying
the same economy across $P$ independent providers, under takedowns that empty a
whole provider, moves the frontier back past $\beta=1$. \textbf{(c)} Per-unit
discovery (aggregate burn $\delta K$) is strictly more available than the constant
aggregate rate, so the latter is conservative. \textbf{(d)} The diversification law:
simulated $P$-provider availability matches the closed form $1-q^P$.}
\label{fig:robust}
\end{figure*}

\subsection{Diversification Across Providers Restores the Frontier}\label{sec:multiprovider}
The bursty-burn result motivates a concrete defense (design principle P3,
\S\ref{sec:design}), namely sourcing domains from several \emph{independent}
providers so that one takedown cannot empty the whole pool. We model this directly.
The total minting and takedown rates are held fixed and split equally over $P$
providers, and a takedown now empties an \emph{entire} provider's pool at once,
the worst case of full correlation within a provider but independence across them.
The system is reachable as long as any one provider still has a live domain.
Figure~\ref{fig:robust}(b) shows the payoff. With one or two providers
($P\!\le\!2$) every takedown is catastrophic and \emph{no} $\beta$ in the swept
range is sustainable, but spreading the same total domain economy across more
providers restores the frontier rapidly, reaching $\bstar\approx0.4$ at $P=4$ and
$\bstar\approx1.25$ at $P=8$. With eight independent providers the frontier even
exceeds $1$, because the probability that \emph{all} providers are simultaneously
empty falls off so fast that the defender tolerates a censor burning domains faster
than it mints them. This does not contradict the single-pool impossibility of
Theorem~\ref{thm:impossible} or the frontier $\bstar\le1$ of \S\ref{sec:frontier},
both of which describe \emph{one} birth--death pool with aggregate burn rate
$\beta$. Diversification is a different resource model---$P$ parallel pools with
union reachability---and is precisely the structural change that lets a defender
escape the single-pool bound and remain available at an aggregate $\beta>1$.
Diversification is therefore a quantitative lever comparable to the minting rate
itself, available at no extra domain cost, requiring only the operational effort of
using multiple registrars and platforms.

\begin{proposition}[Diversification law]\label{prop:diversify}
If the system draws on $P$ provider pools that fail independently, each empty a
fraction $q\in(0,1)$ of the time, and is reachable whenever any pool is non-empty,
then by independence the time-average availability is $A_P=1-q^{P}$. This
approaches $1$ geometrically in $P$, and the frontier $\bstar_P$, the largest
$\beta$ with $A_P\ge\alpha$, satisfies $q(\bstar_P)=(1-\alpha)^{1/P}$, so it grows
with $P$ and exceeds $1$ once $(1-\alpha)^{1/P}>q(1)$.
\end{proposition}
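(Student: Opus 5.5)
The plan is to treat the $P$ provider pools as $P$ independent stationary processes and read availability off the complement event. Let $E_j$ be the event that pool $j$ is empty at a stationary time. Write $q=\Pr[E_j]$, the same for every $j$ because the total minting and takedown rates are split equally, so each pool runs the same chain.

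\emph{Step 1: the availability formula.} The system is unreachable exactly when every pool is empty, i.e.\ on $\bigcap_{j=1}^{P}E_j$. The pools are driven by disjoint, independent Poisson clocks: per-provider minting and per-provider takedowns. Their joint stationary law is therefore the product of the marginals, so $\Pr[\bigcap_j E_j]=q^P$. Ergodicity of the finite product chain then identifies the time average with the stationary probability, giving $A_P=1-q^P$. Since $q\in(0,1)$, $q^P=e^{-P\ln(1/q)}$, which decays geometrically in $P$. This step is essentially the same factorization argument used in Theorem~\ref{thm:closedform}.

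\emph{Step 2: the frontier equation.} View $q=q(\beta)$ as a function of the aggregate burn rate with the split fixed. I will argue that $q$ is continuous and strictly increasing in $\beta$. For the single-pool birth--death model this is exactly the monotonicity of $\pi_0$ used in Theorem~\ref{thm:frontier}. For the provider-takedown model, each pool alternates between a non-empty phase, ended by a takedown at rate proportional to $\lamdisc/P$, and an empty phase, ended by the next mint at rate $\lamintro/P$. So $q$ is an alternating-renewal fraction that increases with $\beta$. It follows that $A_P$ is strictly decreasing in $\beta$, so $\{\beta : A_P\ge\alpha\}$ is an interval $(0,\bstar_P]$. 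Its endpoint solves $1-q^P=\alpha$, i.e.\ $q(\bstar_P)=(1-\alpha)^{1/P}$.

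\emph{Step 3: the comparative statics.} Because $1-\alpha\in(0,1)$, the quantity $(1-\alpha)^{1/P}$ increases in $P$ toward $1$. Applying the increasing function $q^{-1}$ shows that $\bstar_P$ grows with $P$. Moreover, $\bstar_P>1$ exactly when $q^{-1}\big((1-\alpha)^{1/P}\big)>1$, i.e.\ when $(1-\alpha)^{1/P}>q(1)$. Since $q(1)<1$, this condition holds for all sufficiently large $P$.

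The main obstacle is Step 2. I must justify that $q$ depends on $\beta$ alone and is strictly monotone, rather than just assuming it. If the per-provider model is not pinned down in the paper, I would state monotonicity of $q$ as the hypothesis used. I would then verify it explicitly for the two concrete models: the birth--death $\pi_0$ of Proposition~\ref{prop:dichotomy} and the alternating-renewal empty fraction. Independence across providers is taken as given by the statement.
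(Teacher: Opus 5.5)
Your proposal is correct and matches the paper's argument. The paper gives only the one-line independence factorization $\Pr[\text{all pools empty}]=q^{P}$, then solves $1-q^{P}=\alpha$ and inverts the increasing map $q$; you make explicit the product-chain ergodicity and the strict monotonicity of $q$ that the paper leaves implicit. Step~3 also needs the same function $q$ for every $P$. This holds in the takedown model: with mint rate $\lamintro/P$ and takedown rate $\lamdisc/P$, your alternating-renewal fraction is $q=\beta/(1+\beta)$, which does not depend on $P$.
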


Figure~\ref{fig:robust}(d) confirms the simulated $A_P$ matches $1-q^{P}$ with the
measured per-provider empty fraction $q$, the theorem-level form of this result.
Two further stresses appear in the same figure. Figure~\ref{fig:robust}(c) replaces
the constant aggregate burn with \emph{per-unit} discovery
(Proposition~\ref{prop:statedep}): the interval frontier rises from
$\bstar\approx0.65$ to $\bstar\approx1.67$, since occupancy-scaling discovery makes
the pool strictly more available, so the paper's constant-rate frontier is the
conservative case.

\subsection{Sensitivity and Adversary Profiles}
Figure~\ref{fig:bstar} reports the closed-form time-average frontier
$\bstar_{\mathrm{avg}}$ of Theorem~\ref{thm:frontier} over the $(n,\kmax)$ grid
for each adversary profile (the operational interval frontier $\bstar$ of
Fig.~\ref{fig:phase} lies below these values). The qualitative picture is robust.
$\bstar_{\mathrm{avg}}$ increases with redundancy $n$ and buffer $\kmax$ and
saturates near $1$. The adversaries differ where the theory predicts, in the
redundancy they demand. The faster-discovery GFW and Iran profiles leave the address
factor below $\alpha$ at $n=2$ (blank cells, where no $\beta$ suffices), whereas the
slower TSPU is already sustainable there. Above $n\ge4$ the frontier is
essentially adversary-independent, since the address layer saturates and only the
domain economy binds. No single profile carries the claim.

\begin{figure*}[t]
\centering
\includegraphics[width=0.96\textwidth]{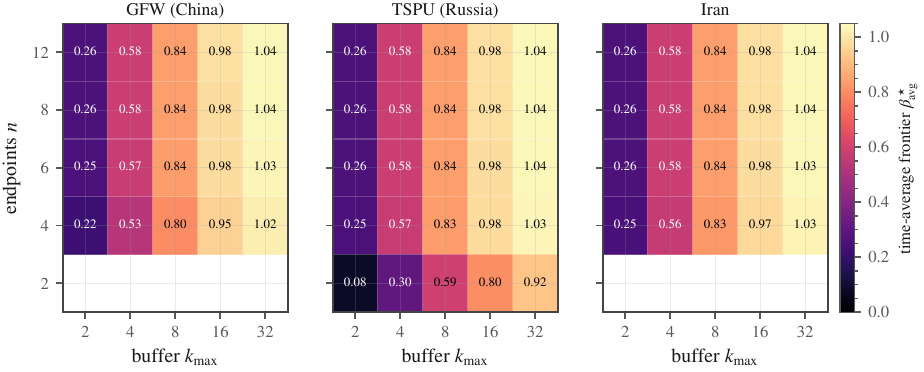}
\caption{Closed-form time-average frontier $\bstar_{\mathrm{avg}}$
(Theorem~\ref{thm:frontier}) over the $(n,\kmax)$ grid ($\alpha=0.95$) under
three adversary profiles. Blank cells are configurations whose address factor
is below $\alpha$ (no $\beta$ suffices). $\bstar_{\mathrm{avg}}$ increases with
redundancy $n$ and buffer $\kmax$ and saturates near $1$. The faster-discovery GFW
and Iran profiles demand more redundancy (no sustainable $\beta$ at $n=2$), and
above $n\ge4$ the frontier is adversary-independent.}
\label{fig:bstar}
\end{figure*}

\subsection{The Censor's Optimal Budget Split}
Proposition~\ref{prop:stackelberg} predicts that a budget-constrained censor
concentrates its discovery effort on whichever resource the defender
under-provisions, and, once redundancy is adequate, on the domain channel.
Figure~\ref{fig:budget} tests this directly. We give the censor a fixed discovery
budget $B$ split by a fraction $f$ to domain discovery (with the remaining
$(1{-}f)B$ spread over the $n$ endpoints' address discovery) and plot the
defender's availability as $f$ varies. The censor's best response is the $f$ that
\emph{minimizes} availability. With a single endpoint ($n=1$) the address layer is
the weak point, since availability is lowest near $f\approx0$, so the censor attacks
addresses. Adding even one more endpoint changes the calculus, because for
$n\ge2$ availability is minimized at $f\to1$, and the empirical best response is
$f^{\star}=0.97$, that is, essentially all budget to domains. The address layer's
geometric defense (Lemma~\ref{lem:geom}) thus does not merely make address blocking
ineffective. It makes domain blocking the censor's \emph{only} rational move, as
Proposition~\ref{prop:stackelberg} claims and as the GFW's pivot to domain
filtering reflects.

\begin{figure}[h]
\centering
\includegraphics[width=0.92\columnwidth]{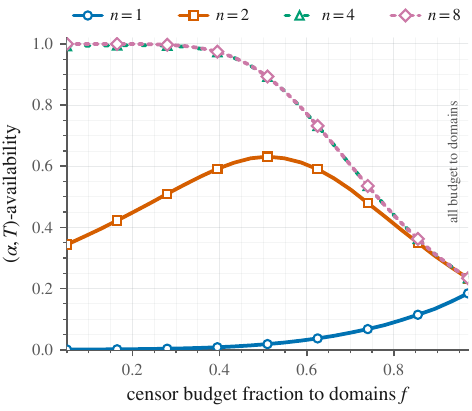}
\caption{The censor's optimal budget split. Defender availability vs.\ the
fraction $f$ of a fixed censor discovery budget spent on domains (rest on
addresses). The censor minimizes availability. At $n=1$ it attacks the address
layer ($f\to0$), but for $n\ge2$ the minimum is at $f\to1$, all budget to
domains, validating Proposition~\ref{prop:stackelberg}.}
\label{fig:budget}
\end{figure}

\subsection{The Domain-Economy Operating Recipe}
The frontier turns directly into an operating recipe. To stay available against a
censor that burns domains at rate $\lamdisc$, a defender must mint fresh domains
at rate $\lamintro\ge\lamdisc/\bstar$. Figure~\ref{fig:economy} plots this
minimum mint rate against the censor's burn rate, with a secondary cost axis. The
relationship is linear with slope $1/\bstar$, and the absolute rates are modest.
Even against a censor blocking hundreds of the defender's units per day, holding
$\beta<\bstar$ requires minting on the order of \PHmintday{} fresh units per day.
What that costs is better measured in provider rate limits than in dollars
(\S\ref{sec:domain}). The cash outlay is small: free-tier platform sub-domains and
short-lived ACME certificates are free, and a self-controlled registrable domain
costs a few dollars a year while a single \emph{wildcard} certificate covers all of
its sub-domains, so certificates are issued per registrable domain, not per name.
At $\bstar\approx0.65$ against a censor burning $200$ units per day the defender
mints roughly $310$ fresh names per day, which with wildcard certificates is a
handful of certificate orders per day, within free ACME rate limits, for a few
dollars per day or less. The binding ceiling under more aggressive burn is not
money but the providers' account-creation and abuse-review rate limits, a further
reason to diversify, since $P$ independent providers supply $P$ independent
rate-limit budgets (\S\ref{sec:multiprovider}); an operator on a single registrar
or platform hits that ceiling, not a financial one, first. The recipe also makes
the censor's cost explicit. To push a defender
off the frontier the censor must sustain a domain-discovery throughput exceeding
$\bstar\lamintro$ \emph{indefinitely}, since blocks are persistent but the
defender's minting is continuous. The contest is thus an economic one between two
provisioning rates, and Appendix~\ref{app:cost} shows the defender's unavoidable cost is set by
the censor's burn rate rather than by how large a buffer it keeps.

\begin{figure}[h]
\centering
\includegraphics[width=0.92\columnwidth]{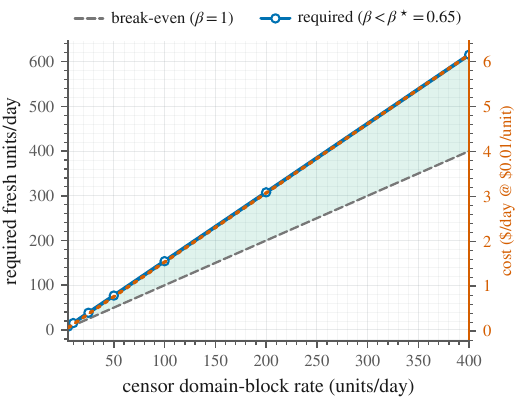}
\caption{The domain-economy operating recipe. Minimum fresh-domain mint rate
needed to hold $\beta<\bstar$ vs.\ the censor's domain-block rate (with a
secondary cost axis). The requirement is linear in the censor's burn rate with
slope $1/\bstar$, so staying available is a question of out-minting rather than
out-rotating.}
\label{fig:economy}
\end{figure}

\subsection{Relation to Deployed Moving-Target DNS}
NinjaDoH~\cite{ninjadoh} reported $>95\%$ availability at $60$\,s rotation under
$30$\,s discovery, costing \$0.50--2.00/day. Our theory explains why: with
discovery roughly twice as fast as rotation, the address factor
of~\eqref{eq:closedform} needs only modest redundancy
(Corollary~\ref{cor:redundancy}) to exceed $0.95$, and the reported cost matches
the regime $\beta<\bstar$. What that work lacked, and this paper supplies, is the
characterization of \emph{when} rotation wins and the warning that the same system
fails without warning once the domain-burn rate overtakes the minting rate.

\subsection{Threats to Validity}\label{sec:threats_val}
A simulation-primary methodology has one dominant risk, the gap between the model
and reality, and we bound it rather than ignore it. \emph{First}, no
result rests on a single operating point, and every claim is reported as a
function of the adversary parameters $(\gamma,\lama,\lamdisc)$ over a sensitivity
sweep (Figs.~\ref{fig:phase}--\ref{fig:bstar}), so qualitative content such as the
phase transition at $\bstar$ survives mis-estimation of any one parameter, and the
adversary profiles (Table~\ref{tab:censors}) merely locate plausible operating
regions within those sweeps. \emph{Second}, where the censor's true capabilities
are uncertain we configure it favorably, with faster discovery and a larger
collateral budget, so the reported frontiers are conservative lower bounds on what
a real defender can achieve. \emph{Third}, the qualitative content is structural
rather than parametric. The geometric address bound (Lemma~\ref{lem:geom}) and the
birth--death dichotomy (Proposition~\ref{prop:dichotomy}) follow from the
\emph{drift} of the chains rather than the exact rate distributions, and Appendix~\ref{app:general}
shows the address bound is distribution-free. \emph{Fourth}, the closed-form law
and the independent simulator agree to three significant figures
(Table~\ref{tab:validation}), so a coding error in one would have to be matched by
an analytical error in the other to pass undetected. Beyond these internal checks,
the model's central qualitative prediction, that a rational censor concentrates on
domain blocking (Proposition~\ref{prop:stackelberg}), is corroborated by the GFW's
actual 2024 pivot to QUIC-by-domain filtering~\cite{gfw_quic}, an external data
point the simulation does not supply. The one quantity this methodology cannot
self-validate is reachability from \emph{inside} a censored network, that is,
whether a hyperscaler IP that our model deems reachable truly carries traffic past
a real middlebox. That is an empirical question orthogonal to
the frontier, since it shifts the absolute timescale rather than the existence or
location of $\bstar$, and is best answered by passive remote measurement, which we
leave to future work.

\section{Design Implications and Operating Blueprint}\label{sec:design}
The theory is not only diagnostic; it prescribes how to build and operate a
moving-target service, in four principles. \textbf{P1 (move, don't hide):} the
censor is assumed to find every endpoint, so security comes from rotation outpacing
action; the address layer is won with a handful of endpoints
(Lemma~\ref{lem:geom}), and the complexity budget belongs on the domain economy.
\textbf{P2 (make minting, not rotation, the first-class control):} few systems
expose a domain-introduction rate, yet Theorem~\ref{thm:impossible} says it is what
decides availability, so a platform should instrument $\lamintro$ and the live-unit
count $K$ and alarm on $\beta$ approaching $\bstar$. \textbf{P3 (diversify across
independent providers):} correlated burns collapse the frontier (\S\ref{sec:bursty}),
so units should be spread across independent registrars, authorities, and platforms,
since concentration under one yields an effective $b\gg1$. \textbf{P4 (size the
buffer for resilience, not cost):} the buffer $\kmax$ raises the frontier and
absorbs bursts, but the unavoidable cost is censor-set
($c_{\mathrm{mint}}\alpha\lamdisc$, independent of $\kmax$;
Appendix~\ref{app:cost}), so $\kmax$ should track the session length $T$ and the
expected burst size.

\paragraph{Estimating $\beta$ from telemetry}
Both rates in $\beta=\lamdisc/\lamintro$ are observable to the operator. It sets
$\lamintro$ directly, and measures $\lamdisc$ from the channel-health monitor it
already runs: the empirical burn rate is the count of live units that go from
reachable to persistently unreachable per unit time (probed from passive remote
vantage points, as in \S\ref{sec:eval}). Their ratio is $\beta$, refreshed online
and alarmed on as it nears $\bstar$; the burst size $b$ (\S\ref{sec:bursty}) is read
off the same trace, and $\lama,\mu$ are the operator's own telemetry. No quantity
requires privileged access to the censor.

\paragraph{An operating recipe}
To be $(\alpha,T)$-available against a censor with domain-burn rate $\lamdisc$, a
defender takes five steps. It runs
$n\ge\lceil\ln(1/(1-\alpha))/\ln((\lama+\mu)/\lama)\rceil$ endpoints
(Corollary~\ref{cor:redundancy}), reads the frontier $\bstar(\alpha,T)$ off
Fig.~\ref{fig:session_frontier} for the chosen $T$, mints at $\lamintro\ge\lamdisc/\bstar$
(Fig.~\ref{fig:economy}), holds a buffer $\kmax$ at least the expected burst size,
and sources units from $\ge\!b$ independent providers. The whole recipe is
computable from quantities a defender can measure, and none of it requires
out-rotating the censor.

\section{Discussion}\label{sec:disc}
\paragraph{Relation to the anonymity trilemma}
Lower bounds are known for \emph{anonymity} (the trilemma trades anonymity,
bandwidth, and latency) but not, previously, for moving-target availability. Our
frontier is an availability analogue, except that the binding resource is economic
(blockable units) rather than informational, so collateral freedom and the domain
economy, not entropy, set the limit.

\paragraph{What the censor can do about it}
The model maps each rational counter-move to a parameter. Raising domain-discovery
throughput $\lamdisc$ increases $\beta$ but is answered by raising $\lamintro$, a
race the cheaper-unit side wins; eroding the collateral budget $\gamma$ attacks the
over-provisioned address layer and buys little until $\gamma\to0$, where the premise
itself fails; and correlated burns (\S\ref{sec:bursty}), the most cost-effective
move, are answered by diversification. In every case the model turns ``the censor
will adapt'' into a specific knob and a specific defense.

\paragraph{Generality beyond DNS and proxies}
The game abstracts away the service: the resource is an (address, unit) pair and the
payoff is reachability, so the frontier applies to \emph{any} service on rotating
cloud infrastructure (a news mirror, a messaging entry point, a human-rights API),
with only the wall-clock calibration of the rates changing. The dimensionless
frontier $\bstar$ and the impossibility for $\beta>1$ are invariant, which is why we
frame the contribution as a theory rather than a system.

\paragraph{From rotation to the full stack}
This paper characterizes the rotation layer. The same timing-game lens extends to
the two other layers a complete system needs, an enumeration-resistant discovery
directory and a traffic-shaping layer with provable indistinguishability, and to a
unifying controller over both, none of which our reachability analysis addresses.
The rotation theory here is the foundation those layers build on.

\paragraph{Ethical considerations}
This work is theory and simulation, and it involves no human subjects, no
in-country deployment, and no user data. The simulator encodes only
already-published censor capabilities~\cite{gfw_quic,gfw_fullyencrypted,tspu},
conferring no advantage a nation-state lacks while giving defenders a shared
testbed, an asymmetry that favors openness. We release all artifacts under
permissive licenses. Appendix~\ref{app:ethics} gives a fuller ethics discussion.

\section{Conclusion}
We have given, to our knowledge, the first formal account of \emph{when}
moving-target censorship resistance works. Modeling censor--defender interaction as
a timing game on a combinatorial (address, domain) space, we proved that the
address resource is free (address blocking decays geometrically in redundancy) and
the domain resource is binding, derived a closed-form availability law, and proved
a sustainability frontier. With a single domain pool, availability is sustainable
iff the domain burn rate $\beta=\lamdisc/\lamintro$ stays below a threshold
$\bstar\le1$, and no rotation speed can compensate once $\beta>1$; diversifying
across independent providers raises the frontier and can push it past $1$. An open
simulator reproduces the predicted phase transition at $\bstar$ across adversary
profiles representative of the GFW, TSPU, and Iran, and turns the theory into a
concrete operating recipe. The result
reframes a decade of system design. The right knob is not rotating IPs faster but
keeping the domain economy ahead of the censor, and the theory gives operators a
falsifiable health metric, the domain burn rate $\beta$, and a threshold to stay
below, packaged with an open simulator any researcher can rerun.

\bibliographystyle{IEEEtran}
\bibliography{reference}

\appendices
\section{Detailed Derivations}\label{app:proofs}
This appendix collects the technical details deferred from
\S\ref{sec:address}--\ref{sec:frontier}.

\paragraph{Product-form stationarity}
The full state is $(c,K)$, where $c\in\{0,\dots,n\}$ is the number of clear
addresses and $K\in\{0,\dots,\kmax\}$ the number of live unblocked domains. Both
coordinates are driven by independent Poisson clocks. The count $c$ increases at
rate $\mu(n-c)$ (rotations restoring a blocked address) and decreases at rate
$\lama c$ (discoveries), while $K$ increases at rate $\lamintro\mathbf{1}[K<\kmax]$
and decreases at rate $\lamdisc\mathbf{1}[K\ge1]$. The generator is therefore the
Kronecker sum of the two one-dimensional generators, so the stationary
distribution factorizes, $\pi(c,K)=\pi_c\,\pi_K$, with $\pi_c$ binomial
$\mathrm{Bin}(n,\mu/(\mu+\lama))$ and $\pi_K\propto(1/\beta)^K$. Reachability is
the event $\{c\ge1\}\cap\{K\ge1\}$, whose stationary probability is the product
$(1-\pi_0^{\,c})(1-\pi_0^{\,K})$, which is~\eqref{eq:closedform}. This is
the formal basis for treating the two layers separately throughout.

\paragraph{Exactness of the interval metric}
Let $\{[s_j,e_j]\}$ be the maximal intervals on $[0,H]$ during which the system is
unreachable (the union of the address-down and domain-down excursions). A window
$[t_0,t_0+T]$ is fully reachable iff it meets no $[s_j,e_j]$, i.e.\ iff
$t_0\notin\bigcup_j (s_j-T,\,e_j)$. Hence the exact $(\alpha,T)$-interval
availability over the run is
\begin{equation}
  1-\frac{\big|\bigcup_j (s_j-T,\,e_j)\cap[0,H-T]\big|}{H-T},
\end{equation}
which the simulator computes by merging the (at most) $|\{j\}|$ widened intervals
in $O(J\log J)$ time. No windowing approximation is involved.

\section{General (Non-Poisson) Discovery}\label{app:general}
Lemma~\ref{lem:geom} assumed exponential discovery. We show the geometric product
bound is distribution-free.

\begin{lemma}[General-discovery address bound]\label{lem:general}
Suppose each endpoint rotates at the points of a rate-$\mu$ Poisson process and,
after each rotation, is discovered after an independent delay $D\sim G$ with mean
$\mathbb{E}[D]<\infty$. Then the stationary probability that the endpoint's
address is blocked is $p=\mu\,\mathbb{E}\big[(I-D)^+\big]$, where $I\sim\mathrm{Exp}(\mu)$
is an inter-rotation interval independent of $D$, and the $n$-endpoint denial
probability is $\Pdeny^{\mathrm{ip}}\le p^{\,n}$. For $D\sim\mathrm{Exp}(\lama)$,
$p=\lama/(\lama+\mu)$, recovering Lemma~\ref{lem:geom}.
\end{lemma}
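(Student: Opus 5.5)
The plan is to view a single endpoint's address as an alternating renewal process driven by the rotation epochs. Because rotations are the points of a rate-$\mu$ Poisson process, the inter-rotation intervals $I_1,I_2,\dots$ are i.i.d.\ $\mathrm{Exp}(\mu)$. Each interval begins with a fresh, unblocked address and carries an independent discovery delay $D_j\sim G$.

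Within cycle $j$ the address is clear for $\min(I_j,D_j)$ and blocked for $(I_j-D_j)^+$. Persistence of blocks keeps it blocked until the next rotation, and the collateral cap can only shorten the blocked time, so it works in the bound's favor.

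The cycles $(I_j,D_j)$ are i.i.d.\ with $\mathbb{E}[I]=1/\mu<\infty$. The renewal-reward theorem therefore gives the long-run fraction of time blocked as
\[
p=\frac{\mathbb{E}[(I-D)^+]}{\mathbb{E}[I]}=\mu\,\mathbb{E}[(I-D)^+].
\]
To identify this with the stationary probability, I will take the time-stationary version of the renewal process (Palm inversion). Non-arithmeticity is automatic, since $I$ is exponential. Finiteness of $\mathbb{E}[D]$ is not even needed here, because $(I-D)^+\le I$.

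For the $n$-endpoint bound, I will use the assumption, as in Lemma~\ref{lem:geom}, that endpoints rotate on independent Poisson clocks with independent delays. Their stationary block indicators are then independent, each with marginal at most $p$, so $\Pdeny^{\mathrm{ip}}=\prod_i\Pr[\text{$i$ blocked}]\le p^n$.

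For the exponential check, I will condition on $D$. Memorylessness gives $\mathbb{E}[(I-D)^+\mid D]=e^{-\mu D}/\mu$. Hence
\[
p=\mathbb{E}[e^{-\mu D}]=\frac{\lama}{\lama+\mu},
\]
which recovers Lemma~\ref{lem:geom}. The same computation shows that in general $p=\mathbb{E}[e^{-\mu D}]$, the Laplace transform of $G$ at $\mu$.

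I also want to justify the remark that heavy-tailed discovery only helps. The map $d\mapsto e^{-\mu d}$ is convex, so Jensen gives $p\ge e^{-\mu\mathbb{E}D}$. That is the wrong direction for a universal comparison. Instead I will note that $p$ is decreasing under stochastic ordering of $D$: making discovery slower in the usual stochastic order lowers $p$. This is the precise sense of the claim.

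The main obstacle is making the passage from the time-average to the instantaneous stationary probability rigorous, together with independence across endpoints. I would handle it by starting each endpoint's process in its stationary (equilibrium) version, so that the renewal-reward limit equals the marginal at every fixed time. I would then invoke the independence of the $n$ driving processes to factor the joint probability.
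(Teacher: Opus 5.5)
Your proposal is correct and is essentially the paper's proof. Renewal--reward over the i.i.d.\ inter-rotation cycles gives $p=\mu\,\mathbb{E}[(I-D)^+]$. Independence of the endpoints' clocks and delays gives $\Pdeny^{\mathrm{ip}}\le p^{n}$, with the cap only removing blocks. The exponential check is the same identity $\mathbb{E}[(I-D)^+\mid D]=e^{-\mu D}/\mu$ that the paper writes out as an explicit integral.

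Your refinements are sound: passing to the stationary version to turn the time-average into a fixed-time probability, noting that $\mathbb{E}[D]<\infty$ is superfluous, and identifying $p=\mathbb{E}[e^{-\mu D}]$. Your Jensen observation is also right, and it shows the paper's following remark is backwards as worded. At fixed mean, a more variable (e.g.\ heavy-tailed) $D$ raises $p$, so only your stochastic-order version of ``slower discovery helps the defender'' actually holds.
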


\begin{proof}
By renewal--reward over an inter-rotation cycle of mean length $\mathbb{E}[I]=1/\mu$,
the fraction of time blocked is $\mathbb{E}[(I-D)^+]/\mathbb{E}[I]=\mu\,\mathbb{E}[(I-D)^+]$.
Independence across endpoints (each draws a fresh address on rotation) gives the
product bound, with strict inequality when the collateral cap binds. For
$D\sim\mathrm{Exp}(\lama)$, $\mathbb{E}[(I-D)^+]=\int_0^\infty\lama e^{-\lama d}\,
\tfrac{1}{\mu}e^{-\mu d}\,dd=\lama/\big(\mu(\lama+\mu)\big)$, so
$p=\lama/(\lama+\mu)$.
\end{proof}

\begin{remark}
Among discovery laws with a fixed mean, heavy-tailed $D$ (occasional very slow
discovery) lowers $\mathbb{E}[(I-D)^+]$ and hence $p$, so erratic censors help the
defender. The exponential case is a convenient, near-worst middle ground, so using
it keeps the address-layer guarantees conservative.
\end{remark}

\paragraph{State-dependent domain discovery (Proposition~\ref{prop:statedep})}
The domain analogue relaxes the \emph{constant} burn rate of
Proposition~\ref{prop:dichotomy}. If the censor blocks each of the $K$ live units
independently at per-unit rate $\delta$, the pool is a birth--death chain with
birth $\lamintro$ (while $K<\kmax$) and death $\delta K$. Detailed balance gives
$\pi_{K+1}=\pi_K\,\lamintro/(\delta(K+1))$, hence
$\pi_K\propto(\lamintro/\delta)^K/K!$, a Poisson law of mean $\rho=\lamintro/\delta$
truncated to $\{0,\dots,\kmax\}$ with $\pi_0=(\sum_{j=0}^{\kmax}\rho^{j}/j!)^{-1}$.
Matching the constant-rate model at the full pool ($\delta\kmax=\lamdisc$, so
$\rho=\kmax/\beta$) and comparing with the geometric $\pi_0$ of
Proposition~\ref{prop:dichotomy} gives, at $\kmax=8$ and $\beta=1.6$, $\pi_0\approx
0.38$ (constant) versus $\pi_0\approx7\times10^{-3}$ (per-unit). The rising death
rate is a restoring force that holds the pool near $\rho$, so occupancy-scaling
discovery is strictly more favorable and the constant-rate frontier is conservative.

\section{A Renewal Approximation for the Interval Frontier}\label{app:renewal}
The time-average frontier $\bstaravg$ has a closed form
(Theorem~\ref{thm:frontier}). The operational interval frontier $\bstar(\alpha,T)$
does not, but admits an accurate approximation. With a strong address layer the
down events are domain-pool depletions. The pool is empty a fraction $\pi_0$ of
the time, and each empty sojourn lasts $\mathrm{Exp}(\lamintro)$ in expectation
(only a mint escapes state $K=0$), so depletions begin at rate
$\nu=\pi_0\lamintro$. Treating onsets as approximately Poisson,
\begin{equation}\label{eq:renewal}
  \text{interval-avail}(\beta,\kmax,T)\;\approx\;\big(1-\pi_0\big)\,
  e^{-\pi_0\lamintro T},
\end{equation}
and $\bstar(\alpha,T)$ is the root in $\beta$ of the right-hand side equalling
$\alpha$. Table~\ref{tab:renewal} compares this prediction against the simulator
(canonical config, $\kmax=8$, $\lamintro=1$). The approximation tracks the
measured frontier to within about $0.03$, the order of the sweep resolution,
slightly under-estimating $\bstar$ because it treats depletion onsets as
independent and so ignores their clustering. Equation~\eqref{eq:renewal} is thus
a usable design formula, since a defender can read $\bstar$ off it without
simulation.

\begin{table}[h]
\renewcommand{\arraystretch}{1.15}
\caption{Interval frontier $\bstar(\alpha{=}0.95,T)$, renewal
approximation~\eqref{eq:renewal} vs.\ simulation (\S\ref{sec:eval}, $\kmax=8$).}
\label{tab:renewal}
\centering
\footnotesize
\begin{tabular}{lccc}
\toprule
session length $T$ & $1$ & $5$ & $20$ \\
\midrule
approximation~\eqref{eq:renewal} & $0.74$ & $0.62$ & $0.52$ \\
simulation                        & $0.75$ & $0.65$ & $0.55$ \\
\bottomrule
\end{tabular}
\end{table}

\section{The Cost-Optimal Domain Economy}\label{app:cost}
Let $c_{\mathrm{mint}}$ be the cost of minting one registrable domain and
$c_{\mathrm{hold}}$ the per-unit-time cost of holding one live domain.

\begin{proposition}[Censor-set cost floor]\label{prop:cost}
In steady state the realized minting throughput equals the burn throughput,
$\lamdisc\,\Pr[K\ge1]$, so at an $(\alpha,T)$-available operating point the minting
cost rate is
\begin{equation}
  c_{\mathrm{mint}}\,\lamdisc\,\Pr[K\ge1]\;\approx\;c_{\mathrm{mint}}\,\alpha\,\lamdisc,
\end{equation}
independent of the buffer $\kmax$. The total operating cost rate is
$c_{\mathrm{mint}}\alpha\lamdisc+c_{\mathrm{hold}}\,\mathbb{E}[K]$.
\end{proposition}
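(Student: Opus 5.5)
The plan is to work directly with the capped birth--death chain $K(t)$ of Proposition~\ref{prop:dichotomy} and apply a flow-balance (rate-conservation) argument, not an explicit formula for $\pi_0$. \textbf{Step one: throughput equality.} For any positive-recurrent birth--death chain on a finite state space, the long-run rate of upward transitions across each cut $\{k\}\to\{k+1\}$ equals the long-run rate of downward transitions. Summing over $k$, total births per unit time equal total deaths per unit time. This is just the detailed-balance relation $\pi_k\lamintro=\pi_{k+1}\lamdisc$ summed over $k=0,\dots,\kmax-1$. The left side is the realized minting throughput $\lamintro\Pr[K<\kmax]$, because minting is suppressed at a full buffer. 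The right side is the burn throughput $\lamdisc\Pr[K\ge1]$. Invoking the ergodic theorem for finite CTMCs then turns these stationary rates into almost-sure long-run counts of mints per unit time.

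\textbf{Step two: identify $\Pr[K\ge1]$ with $\alpha$.} By Theorem~\ref{thm:closedform}, the time-average availability is $A=(1-\pi_0)c_a$, so $\Pr[K\ge1]=1-\pi_0=A/c_a$. At an operating point chosen to be $(\alpha,T)$-available, the interval guarantee implies $A\ge\alpha$, since the time average upper-bounds interval availability (as in the proof of Theorem~\ref{thm:impossible}). In the regime the recipe targets, the address factor is saturated ($c_a\approx1$ by Corollary~\ref{cor:redundancy}), and the defender tunes $\beta$ to sit at the frontier rather than far inside it. Hence $1-\pi_0\approx\alpha$.

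This approximation is the main obstacle. It is really an identification at the frontier, not an identity. I would state it honestly in two parts. First, there is an exact inequality $c_{\mathrm{mint}}\lamdisc\Pr[K\ge1]\ge c_{\mathrm{mint}}\,\alpha\,\lamdisc$, which holds whenever $A\ge\alpha$ because $\Pr[K\ge1]\ge A$. Second, approximate equality holds when the operating point is tight and $c_a\to1$. Multiplying by $c_{\mathrm{mint}}$ gives the claimed cost rate.

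\textbf{Step three: independence from $\kmax$ and total cost.} The expression $c_{\mathrm{mint}}\alpha\lamdisc$ involves only $\alpha$ and the censor's rate. The buffer enters only through $\pi_0$, and at a fixed target availability that quantity has been pinned to $\alpha$. The dependence on $\kmax$ is therefore absorbed into the choice of operating $\beta$, not into the cost. Finally, holding cost accrues at rate $c_{\mathrm{hold}}K(t)$, and by ergodicity its time average is $c_{\mathrm{hold}}\mathbb{E}[K]$ under the stationary law $\pi_k\propto\beta^{-k}$. Adding the minting term gives the total operating cost rate.
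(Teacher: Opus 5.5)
Your proposal is correct and follows essentially the same route as the paper: cut-by-cut flux balance for the capped birth--death chain gives realized mints equal to burns, $\lamdisc\Pr[K\ge1]$; the availability requirement gives $\Pr[K\ge1]\gtrsim\alpha$; and the holding term is $c_{\mathrm{hold}}\mathbb{E}[K]$. Your split into the exact bound $\Pr[K\ge1]\ge\alpha$, with near-equality only when the time-average constraint is tight and $c_a\approx1$, is a more careful rendering of the paper's ``$\approx$'' (at the stricter interval frontier $1-\pi_0$ sits somewhat above $\alpha$, so the cost is really pinned between $c_{\mathrm{mint}}\alpha\lamdisc$ and $c_{\mathrm{mint}}\lamdisc$).
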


\begin{proof}
For the capped birth--death chain, in stationarity the up-flux across each edge
equals the down-flux, so the rate of births (mints) equals the rate of deaths
(burns), $\lamdisc\Pr[K\ge1]$. At an $(\alpha,T)$-available point
$\Pr[K\ge1]\gtrsim\alpha$. Holding cost is $c_{\mathrm{hold}}\mathbb{E}[K]$ by
definition.
\end{proof}

There are two consequences. (i) The \emph{flow} cost of domains is dictated by the
censor's burn rate rather than by how large a buffer the defender keeps, since
stockpiling does not raise the bill the censor imposes. (ii) The buffer is
therefore a pure resilience knob, and it should be sized to the target session
length (Fig.~\ref{fig:session_frontier}) and expected burst size (\S\ref{sec:bursty}), with
holding cost $c_{\mathrm{hold}}\mathbb{E}[K]$ the only penalty. Separately, the
required minting \emph{capacity} (peak rate) is $\lamintro\ge\lamdisc/\bstar$, a
provisioning constraint distinct from the per-domain flow cost.

\section{Simulator and Calibration Details}\label{app:sim}
\paragraph{Transition rates}
The simulator advances the two layers by the next-event method with the rates of
Table~\ref{tab:rates}, and bursty burns (\S\ref{sec:bursty}) replace the domain
death by a batch of $b$ removals at rate $\lamdisc/b$, leaving the mean burn rate
unchanged. Because the layers are independent, one address realization is reused
across an entire domain sweep, which is the bulk of the speedup.

\begin{table}[h]
\renewcommand{\arraystretch}{1.2}
\caption{Simulator transition rates (next-event / Gillespie).}
\label{tab:rates}
\centering
\footnotesize
\begin{tabular}{llc}
\toprule
layer & transition & rate \\
\midrule
address & $c\to c-1$ (block) & $\lama\,c$ \\
address & $c\to c+1$ (rotate/recover) & $\mu\,(n-c)$ \\
domain  & $K\to K+1$ (mint), $K<\kmax$ & $\lamintro$ \\
domain  & $K\to K-1$ (burn), $K\ge1$ & $\lamdisc$ \\
\bottomrule
\end{tabular}
\end{table}

\paragraph{Next-event loop and metric}
Each realization advances time by $\mathrm{Exp}(R)$ for total rate $R$, fires one
transition with probability proportional to its rate, and accumulates occupancy.
After the warm-up, the unreachable intervals are the union of the address-down
($c=0$) and domain-down ($K=0$) excursions, and $(\alpha,T)$-interval availability
is computed exactly from them by the measure of Appendix~\ref{app:proofs}. Since the
layers share no state, a sweep reuses one address realization per seed across all
domain runs, the source of the few-minute runtime.

\paragraph{Calibration methodology}
Adversary \emph{mechanisms} are fixed to documented GFW, TSPU, and Iranian
designs~\cite{gfw_quic,gfw_fullyencrypted,tspu,iran_whitelister}. In deployment
their free rates are fit to public OONI~\cite{ooni} and Censored
Planet~\cite{censoredplanet} archives by replaying historical measurement events
and maximizing agreement on a held-out split. The results here use the
dimensionless profiles of Table~\ref{tab:censors}, chosen adversary-favorably so the
guarantees are conservative.

\paragraph{Canonical parameters and reproducibility}
Unless noted, the canonical configuration is $n=8$, $\mu/\lama=3$, $\kmax=8$, $T=5$,
$\lamintro=1$, and experiments sweep one axis at a time ($\beta\in[0.2,2.0]$,
$\mu/\lama\in[0.25,8]$, $\kmax\in[2,32]$, $T\in[1,20]$, $b\in[1,8]$, $P\in[1,8]$),
each point averaging $\ge12$ seeds with standard-deviation bands over horizons of
$1.2$--$4\times10^4$ units ($10\%$ warm-up). The closed-form
law~\eqref{eq:closedform} matches the simulator to three significant figures across
the sweep (Table~\ref{tab:validation}). One command regenerates every figure, and
all code, configurations, and results are available under permissive licenses at:
\begin{center}
\textcolor{blue}{\urlstyle{same}\url{https://github.com/SecretLabOU/BlockAMole}}
\end{center}

\section{}\label{app:ethics}
\paragraph{Ethics Considerations}
This work consists of theory and discrete-event simulation. It involves no human
subjects, no in-country deployment, no live or third-party systems, and no user or
network data of any kind. All quantities reported come from a self-contained
simulator that any reader can run on a laptop. We followed the principles of
beneficence, respect for persons, and justice throughout, and we judge the overall
benefit of placing moving-target censorship resistance on provable foundations to
outweigh the limited and well-mitigated risks.

\paragraph{Vulnerability disclosure}
Not applicable. The work identifies no vulnerability in any deployed system and
contains no CVE identifiers. For any defect later found in our released artifacts we
will follow coordinated disclosure.

\paragraph{Human subjects research}
Not applicable. There are no human subjects and no IRB review applies, and the paper
handles no personally identifiable or sensitive data. The only external data are
published, aggregate measurement archives (for example OONI and Censored Planet)
cited for calibration.

\paragraph{Working with live systems}
Not applicable. Nothing outside our own simulator was touched, so no permission was
needed and there is no risk of harm or disruption to third-party systems.

\paragraph{Developing new tools or technologies}
The theory and open simulator are dual-use. The clearest misuse is that our central
finding, that a rational censor should target domain discovery over IP blocking,
could steer a censor's investment. We judge this risk low and well mitigated: the
censor has already drawn that conclusion in practice (the GFW's 2024
QUIC-by-domain filtering), so the guidance confers no advantage a nation-state
lacks; the simulator encodes only published capabilities and contains no exploit,
evasion payload, or deanonymization method, giving reachability guarantees for
legitimate operators rather than an attack tool; and the same artifact materially
helps defenders, who had no shared way to reason about when rotation wins. We will
follow coordinated disclosure for any implementation issue in released artifacts. On
balance the work strengthens free and open access to information far more than it
could aid its suppression.

\end{document}